\documentclass[12pt]{article}
\usepackage{amsmath}
\usepackage{graphicx,psfrag,epsf}
\usepackage{enumerate}
\usepackage{natbib}
\usepackage{url} 

\usepackage{setspace}
\usepackage[english]{babel}
\usepackage{pgfplots} 
\usepackage{float}
\usepackage{subcaption}
\usepackage{layout}
\usepackage{amsfonts} 
\usepackage{amsmath}
\usepackage{bm}
\usepackage{multirow}
\usepackage{array}
\usepackage{booktabs}
\usepackage{mathtools}
\usepackage{array}
\usepackage[utf8]{inputenc}
\usepackage{graphicx}
\usepackage{amsthm}
\newcommand{\blind}{0}

\addtolength{\oddsidemargin}{-.5in}%
\addtolength{\evensidemargin}{-.5in}%
\addtolength{\textwidth}{1in}%
\addtolength{\textheight}{1.3in}%
\addtolength{\topmargin}{-.8in}%

\theoremstyle{plain}
\newtheorem{theorem}{Theorem}[section]
\newtheorem{lemma}[theorem]{Lemma}
\newtheorem{corollary}[theorem]{Corollary}

\theoremstyle{definition}
\newtheorem{definition}[theorem]{Definition}

\theoremstyle{remark}

\begin{document}

\def\spacingset#1{\renewcommand{\baselinestretch}%
{#1}\small\normalsize} \spacingset{1}


\if0\blind
{
  \title{\bf The extension of Pearson's correlation coefficient, measuring noise, and selecting features}
  \author{Reza Salimi\hspace{.2cm}\footnote{Salimireza21@gmail.com}\\
    Kharazmi University of Tehran, Department of Finance\\
    and \\
    Kamran Pakizeh\footnote{k.pakizeh@khu.ac.ir} \\
    Kharazmi University of Tehran, Department of Finance}
  \maketitle
} \fi

\if1\blind
{
  \bigskip
  \bigskip
  \bigskip
  \begin{center}
    {\LARGE\bf The extension of Pearson's correlation coefficient, measuring noise, and selecting features}
\end{center}
  \medskip
} \fi

\bigskip
\begin{abstract}
\setstretch{1.3}
Not a matter of serious contention, Pearson's correlation coefficient is still the most important statistical association measure. Restricted to just two variables, this measure sometimes doesn't live up to users' needs and expectations. Specifically, a multivariable version of the correlation coefficient can greatly contribute to better assessment of the risk in a multi-asset investment portfolio. Needless to say, the correlation coefficient is derived from another concept: covariance. Even though covariance can be extended naturally by its mathematical formula, such an extension is to no use. Making matters worse, the correlation coefficient can never be extended based on its mathematical definition. In this article, we briefly explore random matrix theory to extend the notion of Pearson's correlation coefficient to an arbitrary number of variables. Then, we show that how useful this measure is at gauging noise, thereby selecting features particularly in classification. 
\end{abstract}

\noindent%
{\it Keywords:}  FC-Datasets, FU-Datasets, Maximal Eigenvalue, Euclidean Matrix Norm
\vfill

\newpage
\spacingset{1.45} 
\section{Introduction}
\label{sec:intro}

In order to devise a better way to deal with noise while going over \cite{LiewMayster}, we delved into the magnificent work of \cite{MarcenkoPastur} and found that the distribution proposed by the authors has a premise that has fallen under the radar. Researchers conducting data-scientific endeavors usually employ Marchenko-Pastur distribution to figure out how noisy their dataset is, subject to study, and denoise it. Taking into consideration a dataset as a matrix, we have columns representatives of variables and rows as samples. In such a dataset, the row is the dimension that grows, and the number of columns fixed. However, the Marchenko-Pastur distribution occurs while both matrix dimensions grow. In order to address this issue, we devised a method, finally resulting in the extended correlation coefficient. Upon this, we tried to gain a better understanding of how to extend the notions of covariance and correlation to multi-variable according to their natural definitions. Based on its true definition that is
$cov_{x,y}=\mathbb{E}[(x-\bar{x})(y-\bar{y})]$,
covariance can be extended to three or more variables: 
$cov_{x,y,z}=\mathbb{E}[(x-\bar{x})(y-\bar{y})(z-\bar{z})]$.

However, such an extension does not show direction, something covariance serves to because the number of variables\textemdash not their distance from their means\textemdash  affects most the direction of variables toward each other. In order to drive the correlation coefficient, division by $\sigma_{x}\sigma_{y}\sigma_{z}$ is useless. It is because $\sigma$ is the second root of the second central momentum\textemdash variance. In fact, we should take the third root of the third central momentum into consideration. However, a significant obstacles exists: the Cauchy-Schwartz inequality, which makes the correlation coefficient lie between -1 and +1, holds just for the second power and not higher ones. Actually,  this inequality gives Pearson's correlation coefficient the power to measure the magnitude of statistical association, what covariance itself could not provide.
While the mentioned inequality would be beneficial for higher orders, the derived correlation coefficient, so-called the extended covariance, becomes impractical due to its inability to indicate direction in the numerator.

In existing literature, the multiple correlation coefficient stands out as a measure for the statistical association of multiple variables. This coefficient, denoted as $R$ for three variables $x$,$y$, and $z$ is calculated as $R=\sqrt{\frac{r_{zx}^2+r_{zy}^2-2r_{zx}r_{zy}r_{xy}}{1-r_{xy}^2}}$ concerning $z$. It reveals how well $z$ can be predicted using a linear function of $x$ and $y$, though it is obviously not independent of variables. Recent advancements, such as the non-linear generalization proposed by \cite{AzadkiaChatterjee}, offer alternatives in measuring conditional dependence through regression. This indicator authentically assesses the dependence between $x$ and $y$ relative to $z_{1}$, $z_{2}$, ..., $z_{n}$, converging to a limit within [0,1]. \cite{Burgin} contributes to financial estimation by aggregating all pairwise correlations into a single measure. Additionally, leveraging canonical correlations, \cite{GoreckiKrzyskoRatajczakWolynski} extends the classical correlation coefficient to functional data, presenting a novel approach in multivariate functional space.
 
In recent years, studies on measures of statistical association have predominantly aimed to address inherent shortcomings, such as the incapacity to detect non-monotonic associations. A comprehensive examination, as conducted by \cite{JosseHolmes}, has yielded numerous suggestions to rectify these weaknesses. Notable among them are the RV coefficient proposed by \cite{Escoufier}, \cite{Gower}, \cite{Schoenberg}, \cite{PaulEscoufier}, \cite{HolmesSusan}, the dCov coefficient introduced by \cite{SzekelyRizzoBakirov}, \cite{Newton}, \cite{Romano}, \cite{Romanoj}, along with kernel-based and graph-based measures. While most of these measures focus on or offer generalized versions for associations between two matrices, recent brilliant approaches, as observed in works by \cite{Chatterjee}, \cite{CaoBickel}, \cite{NabarunGhosalSen}, \cite{MathiasHanShi}, \cite{GhosalSen}, and \cite{HongjianDrtonHan}, provide valuable insights into related developments.

In this paper, our approach to extending Pearson's correlation differs significantly from previous methods. We demonstrate that exploring measures of association within a broader context, such as random matrix theory, yields more robust insights. Random-matrix methods have gained prominence in recent decades for their profound knowledge and versatile techniques. Initially introduced by \cite{Wishart} and later advanced by \cite{Wigner} in the statistical distribution of nuclear energy levels, random matrix theory has become foundational in various scientific domains. He also published two consecutive prominent articles regarding the enormous expansion of symmetric matrices and level spacing distribution of symmetric matrices: \cite{Wignerr} and \cite{Wignerrr}. Freeman Dyson, a pioneering mathematician, viewed random matrices as a new statistical framework, establishing the mathematical foundations in a series of papers on energy statistics \cite{Dyson, Dysonn, Dysonnn}. In contemporary statistics, random matrix theory addresses problems in dimension reduction, hypothesis testing, clustering, regression analysis, and covariance estimation, as highlighted by \cite{DebashisAue}.

Major problems in the study of multivariate data revolve around the eigendecomposition of certain Hermitian or symmetric matrices, falling into the Wishart and double Wishart problem categories. Principal component analysis (PCA) \cite{JolliffeCadima}, factor analysis, MANOVA, CCA \cite{HardleSimar}, tests for equality of covariance matrices, and tests for linear hypotheses in multivariate linear regression problems are exemplars of these categories. Random matrices play a significant role in multivariate linear regression, classification, and clustering problems.

Analyzing the behavior of eigenvalues and eigenvectors of random symmetric or Hermitian matrices dates back to the work of \cite{Pearson} who introduced the concept of dimension reduction through PCA. In this article, we leverage eigenvalues, eigenvectors, and matrix decomposition to enhance our understanding of the correlation coefficient and extend it to multiple variables.

\section{Fully-Correlated and Fully-Uncorrelated}
\label{Fully-Correlated and Fully-Uncorrelated}

In the case of two random variables, positive and negative complete correlations are mapped into +1 and -1, the ends of the interval in which the correlation coefficient lies. A correlation formula confined within a bounded interval is crucial for providing a meaningful assessment of the association magnitude. To establish a mathematical relation for an n-variable version of the correlation coefficient, it is imperative to define the cases of complete correlation and complete uncorrelation, which correspond to the extremities of the interval within which the forthcoming formula is constrained.

\subsection{Fully-Correlated Dataset}
\label{Fully-Correlated Dataset}

\begin{definition}[\textbf{FC-dataset}]
We refer to a dataset wherein all variables have perfect correlation pairwise, regardless of the sign, as the fully-correlated dataset, abbreviated to \textbf{FC-dataset}.
\end{definition}

The following shows FC-datasets for three variables: $A$, $B$, and $C$. 

\par
\begin{center}
\begin{minipage}{0.45\textwidth}
\includegraphics[width=\linewidth]{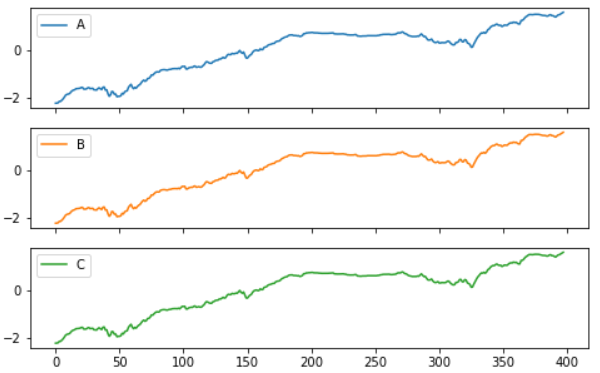}
\captionof*{figure}{(a)}
\end{minipage}\hfill
\begin{minipage}{0.45\textwidth}
\includegraphics[width=\linewidth]{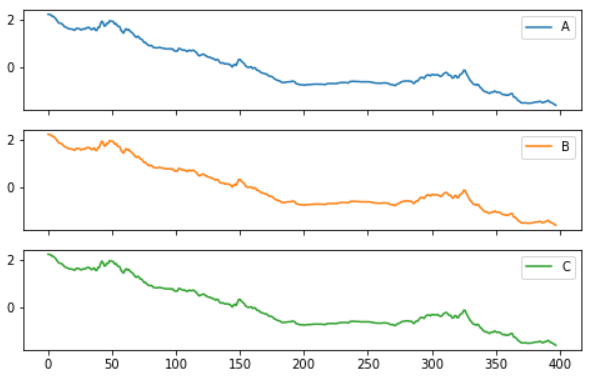} 
\captionof*{figure}{(b)}   
\end{minipage}
\begin{minipage}{0.45\textwidth}
\includegraphics[width=\linewidth]{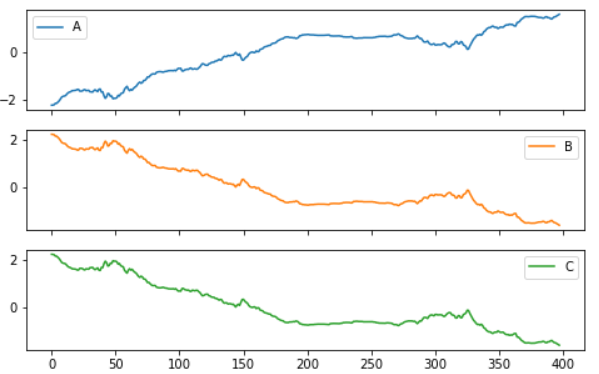}
\captionof*{figure}{(c)}
\end{minipage}\hfill
\begin{minipage}{0.45\textwidth}
\includegraphics[width=\linewidth]{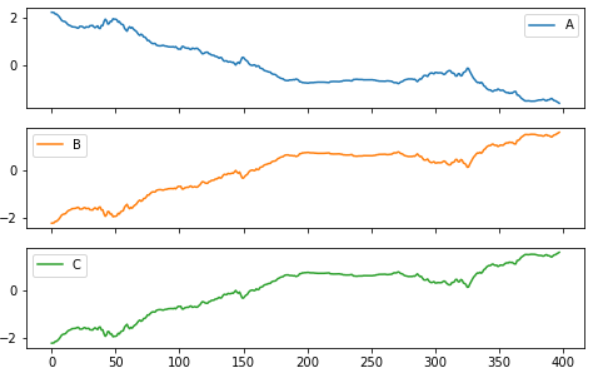}
\captionof*{figure}{(d)}
\end{minipage}\hfill
\begin{minipage}{0.45\textwidth}
\includegraphics[width=\linewidth]{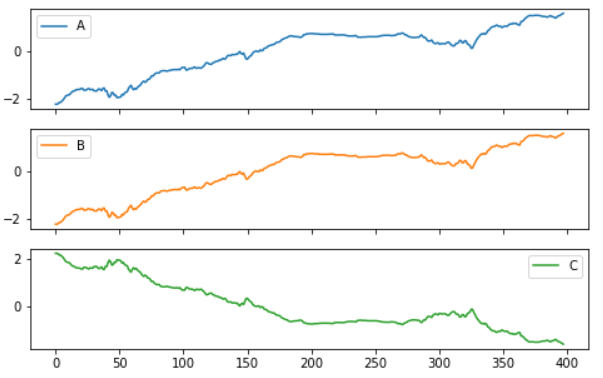}
\captionof*{figure}{(e)}
\end{minipage}\hfill
\begin{minipage}{0.45\textwidth}
\includegraphics[width=\linewidth]{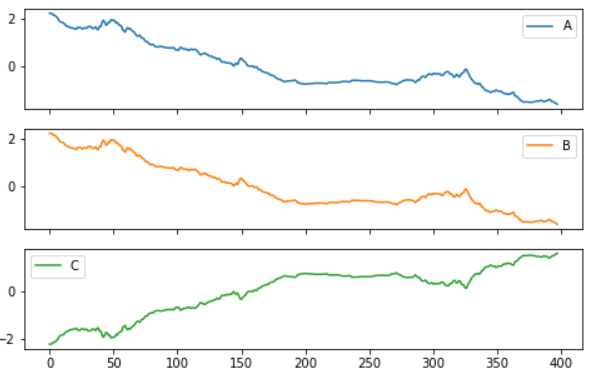}
\captionof*{figure}{(f)}
\end{minipage}\hfill
\begin{minipage}{0.45\textwidth}
\includegraphics[width=\linewidth]{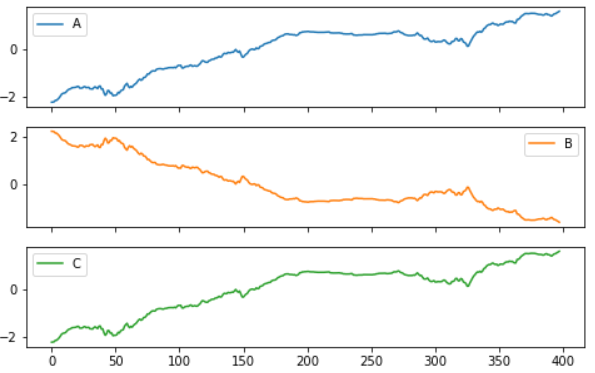}
\captionof*{figure}{(g)}
\end{minipage}\hfill
\begin{minipage}{0.45\textwidth}
\includegraphics[width=\linewidth]{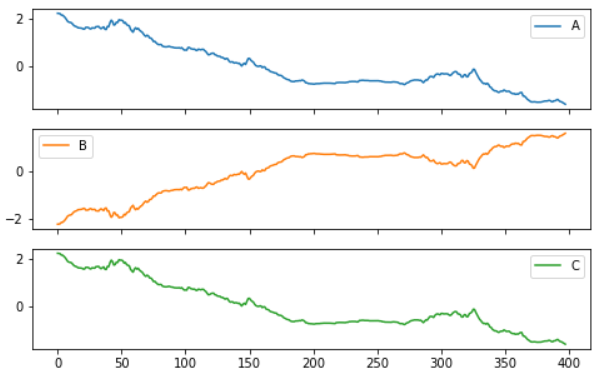}
\captionof*{figure}{(h)}
\end{minipage}\hfill
\captionof{figure}{A Set of Three-variable FC-datasets}
\label{fig1}
\end{center}

Each of these eight three-variable FC-datasets belongs to one of the following correlation matrices:

\[
  \left[\begin{array}{ccc}
+1 & +1 & +1\\
+1 & +1 & +1\\
+1 & +1 & +1
  \end{array}\right] 
  \left[\begin{array}{ccc}
+1 & +1 & -1\\
+1 & +1 & -1\\
-1 & -1 & +1
  \end{array}\right]
  \left[\begin{array}{ccc}
+1 & -1 & +1\\
-1 & +1 & -1\\
+1 & -1 & +1
  \end{array}\right]
  \left[\begin{array}{ccc}
+1 & -1 & -1\\
-1 & +1 & +1\\
-1 & +1 & +1
  \end{array}\right]
\]

We aim to generalize this fact by asserting that for $n$ in $\mathbb{N}$, there exist $2^{n-1}$ FC-datasets. However, before delving into this generalization, it is essential to introduce the concept of \textbf{transitivity} of the correlation sign among variables with complete correlation. As specified in \cite{Langford}, if variables in both pairs $(x, y)$ and $(y, z)$ exhibit either positive or negative complete correlation, then variables $x$ and $z$ share positive complete correlation. If otherwise, $x$ and $z$ have negative complete correlation.
By the way, with correlation coefficient, the \textbf{symmetry} comes naturally by definition.

\begin{lemma}
\label{le1}
When variables\textemdash regardless of the sign\textemdash have complete correlation pairwise, the correlation matrix can be determined by just a single row.
\end{lemma}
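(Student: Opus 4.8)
The plan is to show that every off-diagonal entry of the correlation matrix is forced by the entries of one fixed row, using only the transitivity and symmetry properties of perfect correlation recalled just above.

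First I would fix notation: let $R=(r_{ij})_{1\le i,j\le n}$ denote the correlation matrix of the FC-dataset, so that $r_{ii}=1$, $r_{ij}=r_{ji}$ by symmetry, and $r_{ij}\in\{+1,-1\}$ for $i\neq j$ by the defining hypothesis (every pair of variables is perfectly correlated). Fix a reference index $k$ (for concreteness $k=1$); the claim is that $R$ is recovered from the single row $(r_{k1},\dots,r_{kn})$.

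The crux is the identity $r_{ij}=r_{ki}\,r_{kj}$ for all $i,j$. When $i=k$ or $j=k$ it is immediate because $r_{kk}=1$, and when $i=j$ both sides equal $1$. For the remaining case $i,j\neq k$ with $i\neq j$, I apply transitivity to the two pairs $(X_i,X_k)$ and $(X_k,X_j)$: both are perfectly correlated, with signs $r_{ik}=r_{ki}$ and $r_{kj}$ respectively, so transitivity yields that $X_i$ and $X_j$ are perfectly correlated with sign equal to the product, i.e.\ $r_{ij}=r_{ki}r_{kj}$. Hence every entry of $R$ is an explicit function of the $k$-th row, so that row determines $R$; and since $k$ was arbitrary, any one row of $R$ suffices.

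I do not expect a genuine obstacle here, since the argument is essentially a single application of the quoted transitivity rule; the only care needed is the bookkeeping of the degenerate indices ($i=k$, $j=k$, $i=j$) and invoking transitivity with the correct pair of edges $\{i,k\}$ and $\{k,j\}$. As a closing remark I would observe that the matrix produced by $r_{ij}=r_{ki}r_{kj}$ is the Gram matrix of the $\pm1$ vector $(r_{k1},\dots,r_{kn})$, hence a legitimate (rank-one) correlation matrix; together with the $2^{n-1}$ admissible sign choices for that row (one entry being forced to $+1$), this is exactly the setup needed for the forthcoming enumeration of FC-datasets.
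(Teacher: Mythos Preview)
Your proposal is correct and follows essentially the same approach as the paper: both arguments fix a reference row and use the transitivity and symmetry of perfect correlation to recover every remaining sign from that row. Your version is simply more explicit, writing out the identity $r_{ij}=r_{ki}r_{kj}$ and checking the degenerate indices, whereas the paper states the transitivity step in words.
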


\begin{proof}
We need to determine just signs since the absolute values of all entries are 1. We suppose $x_{1}$, $x_{2}$, ..., $x_{n}$ as random variables ($n$ $\in \mathbb{N}$). On the t-th row of the correlation matrix, we have the signs of the correlation coefficients between $x_{t}$ and each of $\{x_{i}\}_{i=1}^{n}$, so according to the aforementioned concepts of transitivity and symmetry, we can determine the signs of correlation coefficient between all pairs of $x_{i}$ and $x_{j}$ where $1\leq i,j\leq n$
\end{proof}

\begin{theorem}
\label{th1}
For every $n$ in $\mathbb{N}$, n-variable FC-datasets fall into 2$^{n-1}$  different correlation matrices. 
\end{theorem}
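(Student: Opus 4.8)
The plan is to use Lemma~\ref{le1} to reduce the counting problem to a count of admissible sign vectors for a single row. Fix variables $x_1,\dots,x_n$ and, in an FC-dataset, record the first row of the correlation matrix as a sign vector $(s_1,\dots,s_n)\in\{+1,-1\}^n$ where $s_i$ is the sign of the correlation between $x_1$ and $x_i$. By Lemma~\ref{le1}, this row determines the whole matrix (via transitivity and symmetry), so the map ``FC-dataset $\mapsto$ first-row sign vector'' is well defined. The first step is to observe that the diagonal entry is forced: $s_1=+1$ always, since the correlation of $x_1$ with itself is $+1$. Hence the first-row sign vector lives in $\{+1\}\times\{+1,-1\}^{n-1}$, a set of size $2^{n-1}$, giving the upper bound.

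For the lower bound I would show that every choice of $(s_2,\dots,s_n)\in\{+1,-1\}^{n-1}$ is actually realized by an FC-dataset, and that distinct choices give distinct correlation matrices. Realizability is easy to exhibit concretely: take any non-degenerate random variable $u$ (or a column vector of samples with nonzero variance) and set $x_i = s_i\, u$ for $i\ge 2$, and $x_1=u$; then each pair $(x_i,x_j)$ is perfectly correlated with sign $s_i s_j$ (and $\mathrm{corr}(x_1,x_i)=s_i$), so the resulting dataset is an FC-dataset whose first-row sign vector is exactly $(1,s_2,\dots,s_n)$. Injectivity is immediate: the first row of the correlation matrix is literally $(1,s_2,\dots,s_n)$, so two datasets with different sign vectors have different correlation matrices. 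Combining, the number of distinct correlation matrices arising from $n$-variable FC-datasets is exactly $2^{n-1}$.

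There is one subtlety worth handling carefully, and it is the main thing to get right: I should confirm that the transitivity/symmetry rules of Lemma~\ref{le1} are internally consistent, i.e.\ that the matrix reconstructed from a sign vector $(1,s_2,\dots,s_n)$ is genuinely a valid correlation matrix of an FC-dataset and not an over-determined, contradictory object. Concretely, the reconstruction rule forces the $(i,j)$ entry to be $s_i s_j$; one must check this is symmetric (clear, since $s_is_j=s_js_i$), has $+1$ on the diagonal (clear, $s_i^2=1$), and is consistent with transitivity around every triangle ($s_is_j\cdot s_js_k = s_is_k$, which holds because $s_j^2=1$). The explicit construction $x_i=s_i u$ in the previous paragraph simultaneously certifies realizability and consistency, so in the write-up I would lead with that construction and then note that Lemma~\ref{le1} guarantees no \emph{other} FC-dataset produces the same first row, closing the count. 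I expect no real obstacle beyond bookkeeping; the only place to be cautious is making sure the "single row" in Lemma~\ref{le1} is taken to include its diagonal $+1$, which is what pins the count at $2^{n-1}$ rather than $2^n$.
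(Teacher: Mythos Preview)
Your proposal is correct and follows essentially the same approach as the paper: invoke Lemma~\ref{le1} to reduce the count to first-row sign vectors, fix the diagonal entry at $+1$, and count the remaining $2^{n-1}$ choices. Your version is in fact more careful than the paper's, which omits the explicit realizability construction $x_i=s_i u$ and the consistency check on the reconstructed matrix; these additions strengthen the argument without changing its substance.
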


\begin{proof}
Each n-variable FC-datasets has its own correlation matrix. According to \ref{le1}, we can take the first row as the unique representation of the whole matrix, so We need to determine all cases of the first row based on minus or plus sign since the correlation matrix is all-ones pertinent to absolute values of entries. Given the positive correlation of a variable with itself, the first entry in the first row of the correlation matrix has a positive sign, for the rest of the entries on the row, we have two options (plus and minus) for each, so there are 2$^{n-1}$ different first rows for a correlation matrix, and consequently there are 2$^{n-1}$ correlation matrices for all n-variable FC-datasets. 
\end{proof}

\begin{corollary}
We have 2$^{n-1}$ FC-datasets for $n$ in $\mathbb{N}$.  
\end{corollary}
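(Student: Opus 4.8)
The plan is to derive the corollary as an immediate consequence of Theorem~\ref{th1} by showing that $n$-variable FC-datasets are counted exactly by their correlation matrices. Concretely, I would set up the assignment "FC-dataset $\mapsto$ its correlation matrix", observe via Lemma~\ref{le1} and the proof of Theorem~\ref{th1} that the image has at most $2^{n-1}$ elements, and then argue that, under the natural convention for counting FC-datasets, this assignment is a bijection onto that set.

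First I would fix the convention that an $n$-variable FC-dataset is individuated by its correlation matrix: two datasets that produce the same correlation matrix (they can differ only by per-column affine rescalings, which leave every pairwise correlation unchanged) count as the same FC-dataset. Under this convention the map to correlation matrices is well defined and injective by construction, and by Theorem~\ref{th1} its image lies inside a set of size $2^{n-1}$. For surjectivity onto that set, given any admissible first-row sign pattern $(\varepsilon_1,\dots,\varepsilon_n)\in\{+1,-1\}^n$ with $\varepsilon_1=+1$, I would exhibit an explicit FC-dataset realizing it: pick any non-constant sample column $v$ and let the $i$-th variable equal $\varepsilon_i v$. Then the pairwise correlations are $\varepsilon_i\varepsilon_j$, and by the transitivity and symmetry of the correlation sign (the same facts used in Lemma~\ref{le1}) this matrix is precisely the one determined by that first row. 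As $(\varepsilon_2,\dots,\varepsilon_n)$ ranges over $\{+1,-1\}^{n-1}$ we hit all $2^{n-1}$ matrices enumerated in Theorem~\ref{th1}.

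Putting the two directions together yields a bijection between (equivalence classes of) $n$-variable FC-datasets and the $2^{n-1}$ admissible correlation matrices, hence exactly $2^{n-1}$ FC-datasets. The only delicate point is not really mathematical but expository: one must pin down the counting convention cleanly enough that the corollary is more than a verbatim restatement of Theorem~\ref{th1}. I expect that bookkeeping step — making precise in what sense FC-datasets are "the same" — to be the main obstacle; once it is fixed, the realizability construction above closes the argument in a line.
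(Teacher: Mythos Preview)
Your proposal is correct and follows the paper's intended route: the corollary is recorded immediately after Theorem~\ref{th1} with no proof at all, so the paper is treating it as a one-line consequence of that theorem. Your write-up is in fact more careful than the paper, which never makes explicit the counting convention (identifying FC-datasets by their correlation matrix) nor the realizability of each of the $2^{n-1}$ sign patterns; you supply both, and your construction $x_i=\varepsilon_i v$ is exactly the kind of witness the paper's Figure~\ref{fig1} is tacitly relying on.
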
  

\subsection{Fully-Uncorrelated Dataset}
\label{Fully-Uncorrelated Dataset}
\begin{definition}[\textbf{FU-dataset}]
The fully-uncorrelated dataset, denoted by \textbf{FU-dataset}, marks the other end of the spectrum mentioned earlier, where all variables never correlate with each other pairwise. 
\end{definition}

In the following, we have such a dataset with three variables: $A$, $B$, and $C$. 

\par
\begin{center}
\begin{minipage}{0.45\textwidth}
\includegraphics[width=\linewidth]{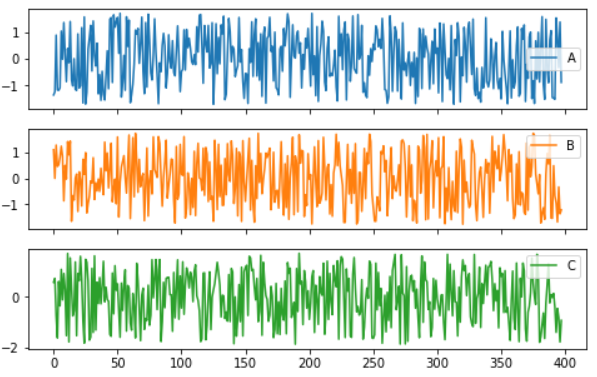}
\captionof{figure}{A Three-variable FU-dataset}
\label{fig2}
\end{minipage}\hfill
\end{center}

Unlike the fully-correlated case, we just have one FU-dataset for every $n$ $\in \mathbb{N}$. Based on the definition, the correlation matrices of FU-datasets is the identity matrix.

\section{The Behavior of Maximal Eigenvalue in FC-datasets}

The study of the correlation matrix of FC-datasets provides a matrix-based picture of the relationship between variables. Clearly, we observe the distribution of the maximal eigenvalue of the correlation matrix as FC-datasets' row grows one by one. However, before going into this distribution, we need to cite and prove an important theorem:

\begin{theorem}
\label{th2}
For every $n$ in $\mathbb{N}$, the correlation matrices of all 2$^{n-1}$ FC-datasets with n variables have the same set of eigenvalues.
\end{theorem}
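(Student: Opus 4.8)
The plan is to show that the correlation matrix of any $n$-variable FC-dataset is obtained from the all-ones matrix $J=\mathbf{1}\mathbf{1}^{\top}$ by conjugation with a signed diagonal matrix; since conjugation preserves the spectrum, all $2^{n-1}$ such matrices will share the eigenvalues of $J$.

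First I would use Lemma~\ref{le1} together with the transitivity and symmetry of the correlation sign to pin down the precise form of an FC correlation matrix $C$. Every off-diagonal entry has absolute value $1$, so only the signs matter. Set $s_{i}:=C_{1i}$ for $i=1,\dots,n$, so that $s_{1}=1$ and each $s_{i}\in\{-1,+1\}$. By symmetry $C_{i1}=C_{1i}=s_{i}$, and by transitivity the sign of the correlation between $x_{i}$ and $x_{j}$ equals the product of the signs of the pairs $(x_{i},x_{1})$ and $(x_{1},x_{j})$; hence $C_{ij}=s_{i}s_{j}$ for all $i,j$. In matrix form, writing $s=(s_{1},\dots,s_{n})^{\top}$, this says $C=ss^{\top}$.

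Next I would introduce $D=\mathrm{diag}(s_{1},\dots,s_{n})$ and observe that $C=ss^{\top}=D\,\mathbf{1}\mathbf{1}^{\top}\,D=DJD$. Since every $s_{i}=\pm1$ we have $D^{2}=I$, so $D^{-1}=D$ and therefore $C=DJD^{-1}$: the matrix $C$ is similar to $J$. Similar matrices have the same characteristic polynomial, hence the same eigenvalues with the same multiplicities. Because the construction above yields such a $C$ for every admissible sign vector $s$, i.e.\ for each of the $2^{n-1}$ FC-datasets, all of them are similar to the one fixed matrix $J$ and thus share a common spectrum, explicitly $\{n,0,0,\dots,0\}$ with $n$ simple and $0$ of multiplicity $n-1$.

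I do not expect a genuine obstacle here; the only point requiring care is the reduction step, namely arguing cleanly from Lemma~\ref{le1} and transitivity that $C$ has the rank-one form $ss^{\top}$ rather than merely ``a prescribed first row.'' Once that structural fact is established, the spectral conclusion is immediate from the invariance of eigenvalues under the (orthogonal, indeed involutive) conjugation by $D$.
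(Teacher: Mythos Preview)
Your proposal is correct and follows essentially the same approach as the paper: both arguments compare an arbitrary FC correlation matrix to the all-ones matrix $J$ via the sign-flip determined by the first row. The paper carries this out by explicitly transforming an eigenvector $v$ of $J$ into an eigenvector $v'$ of $C$ through a block computation, whereas you package the same sign-flip as the similarity $C=DJD^{-1}$ with $D=\mathrm{diag}(s)$; your formulation is a bit cleaner since similarity immediately yields equality of the full spectrum with multiplicities, while the paper's eigenvector argument literally only shows one inclusion.
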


\begin{proof}
 Consider an FC-dataset of $n$ random variables ($n$ in $\mathbb{N}$) whose $m$ variables ($m<n$) have complete negative correlation with the rest $n-m$ variables. We call this dataset $G$. On the other hand, we have a unique FC-dataset whose all $n$ variables have complete positive correlation with each other, and its correlation matrix is all-ones; we denote this unique FC-dataset by $H$. Now, consider $\lambda$ as an arbitrary eigenvalue for the correlation matrix of $H$, and $v=(v_{1},v_{2},...,v_{n})$ as an eigenvector for $\lambda$. According to \ref{le1} and concerning the first row of the correlation matrix of $G$, if we multiply -1 by those elements of $v$ that are placed at the same indices that -1s are posited on $G$'s correlation matrix's first row, we will obtain a new eigenvector, $v^\prime$, that makes $\lambda$ to be an eigenvalue for the correlation matrix of $G$. Without sacrificing the generality of the solution, we juxtapose the mentioned $m$ variables so that in the correlation matrix -1s lie next to each other.

\begin{align*}
\underbrace{\textbf{\Large +1}_{n\times n}}_{\text{Corr}(H)} \times
\underbrace{
\begin{bmatrix}
    v_1\\
    v_2\\
    \vdots\\
    v_n
\end{bmatrix}}_{v}   
&= 
\begin{bmatrix}
    \sum_{i=1}^{n}v_i\\
    \sum_{i=1}^{n}v_i\\
    \vdots\\
    \sum_{i=1}^{n}v_i
\end{bmatrix} 
=
\lambda
\times
\begin{bmatrix}
    v_1\\
    v_2\\
    \vdots\\
    v_n
\end{bmatrix} \\
\end{align*}

\begin{align*}
\underbrace{
\begin{bmatrix}
    \textbf{\large +1}_{(n-m)\times (n-m)} & \textbf{\large -1}_{(n-m)\times m} \\ 
    \textbf{\large -1}_{m\times (n-m)} & \textbf{\large +1}_{m\times m}
\end{bmatrix}}_{\textbf{Corr}(G)}
\times
\underbrace{
\begin{bmatrix}
    v_1\\
    \vdots\\
    v_{n-m}\\
    -v_{n-m+1}\\
    \vdots\\
    -v_n
\end{bmatrix}}_{v'}   
&= 
\begin{bmatrix}
    \sum_{i=1}^{n}v_i\\
    \vdots\\
    \sum_{i=1}^{n}v_i\\
    -\sum_{i=1}^{n}v_i\\
    \vdots\\
    -\sum_{i=1}^{n}v_i
\end{bmatrix} 
=
\begin{bmatrix}
    \lambda \times v_1\\
    \vdots\\
    \lambda \times v_{n-m}\\
    -\lambda \times v_{n-m+1}\\
    \vdots\\
    -\lambda \times v_n
\end{bmatrix} 
=
\\
\lambda
\times
\underbrace{
\begin{bmatrix}
    v_1\\
    \vdots\\
    v_{n-m}\\
    -v_{n-m+1}\\
    \vdots\\
    -v_n
\end{bmatrix}}_{v'}
\end{align*}
\end{proof}

This theorem holds that we can look at all 2$^{n-1}$ FC-datasets ($n$ in $\mathbb{N}$) as a single unique object. Mathematically speaking, all n-variable FC-datasets make a unique equivalence class under the relation of sharing the same set of eigenvalues for their correlation matrices each time samples add to the datasets. Thus, the behavior of the maximal eigenvalue for all FC-datasets is identical. It is because, in FC-datasets, variables entirely correlate with each other, and whenever the row grows, the same maximal eigenvalue comes up.

In the following, for the variables $A$, $B$, and $C$ mentioned earlier in \ref{Fully-Correlated Dataset} the computed maximal eigenvalues each time, as the row grows, are 3. The following shows the scatter plot and the histogram of the maximal eigenvalues.

\section{The Behavior of Maximal Eigenvalue in FU-datasets}

As with FC-datasets, we can apply the same procedure for FU-datasets. However, before delving into the account, we should restate that the correlation matrix of FU-datasets is identity matrix because non-trivial correlations are 0. All eigenvalues of identity matrices are 1, so the largest eigenvalue of the correlation matrix of an FU-dataset is 1.

Likewise, this statement asserts that given $n$ in $\mathbb{N}$, we can consider all FU-datasets with n variables as a unique dataset. From a mathematical standpoint, all FU-datasets belong to the same equivalence class under the relation of sharing the same set of eigenvalues for their correlation matrices, each time samples add to the datasets. So, they have the same maximal eigenvalue each time, and the behavior of this eigenvalue for all FU-datasets is identical. It is because, in FU-datasets, variables are nowhere near correlated to each other, and whenever the row grows, the same maximal eigenvalue is yielded.

If we apply the procedure on $A$,$B$, and $C$ in \ref{Fully-Uncorrelated Dataset}, we will achieve:

\par
\begin{center}
\begin{minipage}{0.8\textwidth}
\includegraphics[width=\linewidth]{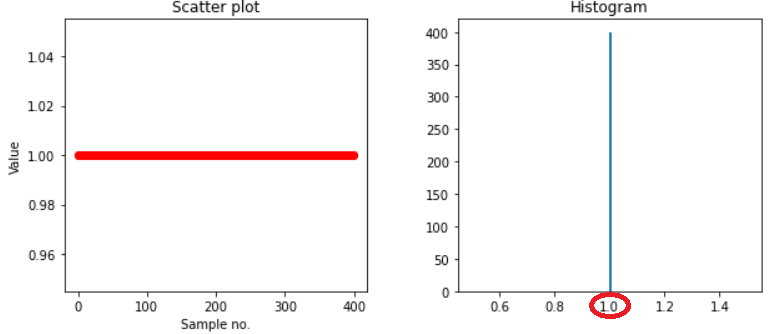}
\captionof*{figure}{}
\end{minipage}\hfill
\captionof{figure}{The Distribution of Maximal Eigenvalues As FU-datasets' Row Increases One by One}
\label{fig4}
\end{center}

\section{The Behavior of Maximal Eigenvalue While Adding Noise}

Needless to say, adding noise to FC-datasets eventually culminates in FU-datasets. Now, we want to explore what would happen to the behavior of the maximal eigenvalue when noise is added to FC-datasets. Our aim through this approach is indicating that the mean of an n-variable's maximal eigenvalue distribution rolls between 1 and n. The following theorems contribute to gaining a better appreciation of this issue. Meanwhile, during the entire of this paper, noise addition is done by the function of "Randbetween" in the Excel software. 

\begin{theorem}
\label{th3}
The largest eigenvalue of a correlation matrix is greater than or equal to 1. 
\end{theorem}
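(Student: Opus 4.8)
\section*{Proof proposal for Theorem \ref{th3}}

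The plan is to exploit the two structural features that every correlation matrix enjoys: it is symmetric positive semidefinite, and every diagonal entry equals $1$. Call the matrix $R$ and suppose it is $n\times n$. Since $R$ is real symmetric, it is orthogonally diagonalizable with real eigenvalues $\lambda_1\geq\lambda_2\geq\cdots\geq\lambda_n$, and positive semidefiniteness (which holds because $R$ is a covariance matrix of the standardized variables) forces $\lambda_i\geq 0$ for all $i$.

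The first route I would take is a trace argument. The trace of $R$ equals the sum of its diagonal entries, which is $n$ because each diagonal entry of a correlation matrix is $1$; on the other hand the trace equals $\sum_{i=1}^n \lambda_i$. Hence $\sum_{i=1}^n \lambda_i = n$. If we had $\lambda_1 < 1$, then every $\lambda_i \le \lambda_1 < 1$ and the sum would be strictly less than $n$, a contradiction; therefore $\lambda_1 = \lambda_{\max}\geq 1$.

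An even shorter alternative I would mention is the Rayleigh-quotient characterization $\lambda_{\max} = \max_{\|v\|=1} v^{\top} R v$. Taking $v = e_i$, the $i$-th standard basis vector, gives $v^{\top} R v = R_{ii} = 1$, so $\lambda_{\max}\geq 1$ immediately. Either argument works; I would probably present the trace version because it also makes transparent where equality holds (namely when all eigenvalues equal $1$, i.e.\ the FU-dataset case).

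Honestly there is no real obstacle here — the only thing to be careful about is justifying that a correlation matrix is genuinely positive semidefinite (so that no eigenvalue is negative, which the trace argument needs) and that it is normalized to have unit diagonal; both are immediate from the definition of the correlation matrix, so the proof is essentially a one-line consequence once those facts are stated.
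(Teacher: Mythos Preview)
Your trace argument is exactly the route the paper takes: assume all eigenvalues are below $1$, sum them, and contradict $\operatorname{tr}(R)=n$. Your version is in fact a bit more careful than the paper's (you explicitly invoke positive semidefiniteness so that ``$\lambda_i<1$'' rather than ``$|\lambda_i|<1$'' is what gets contradicted), and the Rayleigh-quotient alternative you add is a nice bonus the paper does not mention.
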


\begin{proof}
If all eigenvalues of a correlation matrix are absolutely smaller than 1, then the sum of them, which is the trace of the matrix, does not equal to the sum of the ones (1s) located on the main diagonal of the matrix; rather, it is strictly less than that, which is a paradox. Hence, at least one of the eigenvalues is greater than or equal to 1.
\end{proof}

\begin{theorem}
\label{th4}
The largest possible eigenvalue for a correlation matrix is the number of variables, which is taken by the correlation matrices of FC-datasets. 
\end{theorem}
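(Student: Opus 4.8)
The plan is to combine two elementary facts about an arbitrary $n\times n$ correlation matrix $C$. First, its diagonal entries all equal $1$, so $\operatorname{tr}(C)=n$. Second, $C$ is positive semidefinite: writing each variable in mean-centered, unit-variance form $\tilde{x}_i$, the entry $C_{ij}$ is the inner product $\langle \tilde{x}_i,\tilde{x}_j\rangle$, so $C$ is a Gram matrix and hence has only nonnegative eigenvalues $\lambda_1,\dots,\lambda_n\ge 0$. Since these nonnegative numbers sum to $\operatorname{tr}(C)=n$, each of them — in particular the largest — is at most $n$. This gives the upper bound $\lambda_{\max}(C)\le n$.

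Next I would show the bound is attained precisely by FC-datasets, which handles the ``which is taken by'' clause. For the achievability direction, invoke Theorem~\ref{th2}: the correlation matrix of any $n$-variable FC-dataset has the same spectrum as the all-ones matrix $J_n$, whose eigenvalues are $n$ (with eigenvector $(1,\dots,1)^{\top}$) and $0$ with multiplicity $n-1$; hence $\lambda_{\max}=n$ there. Conversely — and this is a natural bonus — if $\lambda_{\max}(C)=n$, then because the remaining eigenvalues are nonnegative and the spectrum sums to $n$, they must all vanish, so $C$ has rank one and can be written $C=vv^{\top}$; the unit-diagonal condition forces $v_i^2=1$, i.e. $v_i\in\{-1,+1\}$, whence $C_{ij}=v_iv_j\in\{-1,+1\}$, which is exactly the correlation matrix of an FC-dataset.

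I expect the only step requiring any care to be the justification that a correlation matrix is positive semidefinite, and this is standard via the Gram-matrix description above. Everything else is the elementary trace bookkeeping already used in the proof of Theorem~\ref{th3}, now pushed one step further using nonnegativity of the eigenvalues, together with the rank-one characterization; the achievability half follows immediately from Theorem~\ref{th2}.
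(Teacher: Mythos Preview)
Your approach matches the paper's: bound $\lambda_{\max}$ by $\operatorname{tr}(C)=n$ using nonnegativity of the eigenvalues, then exhibit $n$ as an eigenvalue for FC correlation matrices. You are more careful than the paper (which writes $\lambda_{\max}\le\sum_i\lambda_i$ without explicitly justifying positive semidefiniteness) and you add a converse the paper omits; your achievability step routes through Theorem~\ref{th2}, whereas the paper simply observes that the first column of an FC correlation matrix is an eigenvector with eigenvalue $n$.
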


\begin{proof}
$\lambda_{max}\leq\sum_{i=1}^{n}{\lambda_{i}}=trace=\sum_{i=1}^{n}{1}=n$. $n$ is an eigenvalue for the correlation matrices of FC-datasets because the first column of the correlation matrices is its non-trivial eigenvector. 
\end{proof}

These theorems assert that the distribution of the maximal eigenvalue in a dataset with n variables typically falls between that of its fully uncorrelated (FU) and fully correlated (FC) counterparts. Specifically, the maximal eigenvalues lie within the closed interval of $[1, n]$.

In the following, we pick the first FC-dataset of (a) in \ref{Fully-Correlated Dataset} and we add noise to it. In (a), the pairwise correlations are positive and complete; then, we attain the distribution of the maximal eigenvalues when its row increases one by one.

\par
\begin{center}
\begin{minipage}{0.45\textwidth}
\includegraphics[width=\linewidth]{1.PNG}
\captionof*{figure}{(a)}
\end{minipage}\hfill
\begin{minipage}{0.55\textwidth}
\includegraphics[width=\linewidth]{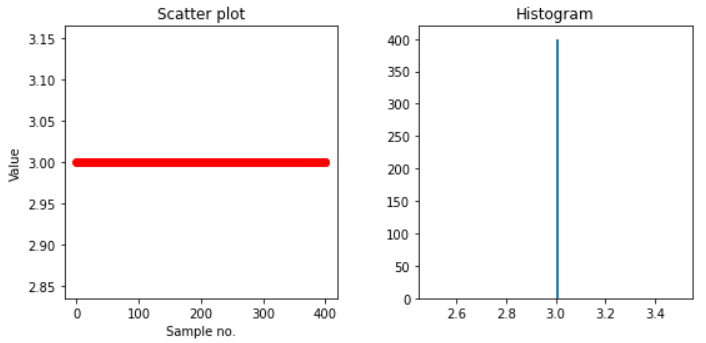} 
\captionof*{figure}{(b)}   
\end{minipage}
\begin{minipage}{0.45\textwidth}
\includegraphics[width=\linewidth]{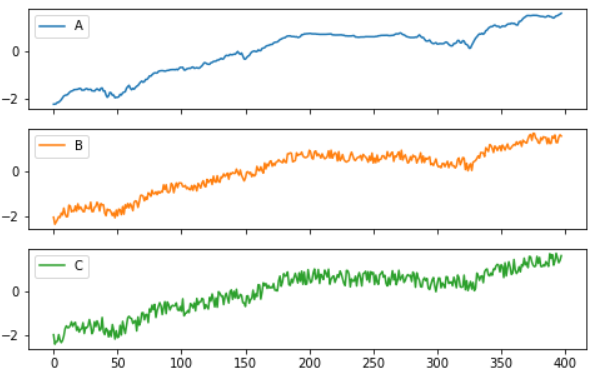}
\captionof*{figure}{(c)}
\end{minipage}\hfill
\begin{minipage}{0.55\textwidth}
\includegraphics[width=\linewidth]{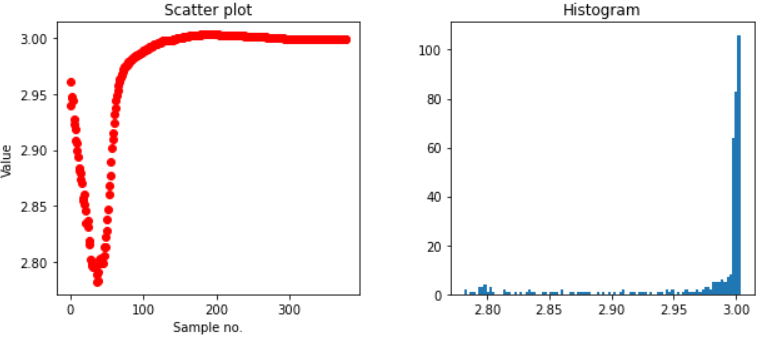} 
\captionof*{figure}{(d)}   
\end{minipage}
\begin{minipage}{0.45\textwidth}
\includegraphics[width=\linewidth]{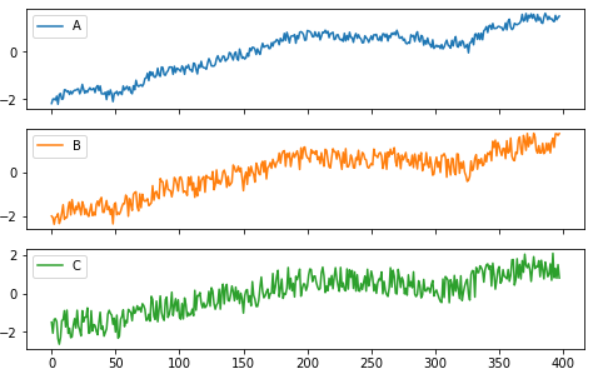}
\captionof*{figure}{(e)}
\end{minipage}\hfill
\begin{minipage}{0.55\textwidth}
\includegraphics[width=\linewidth]{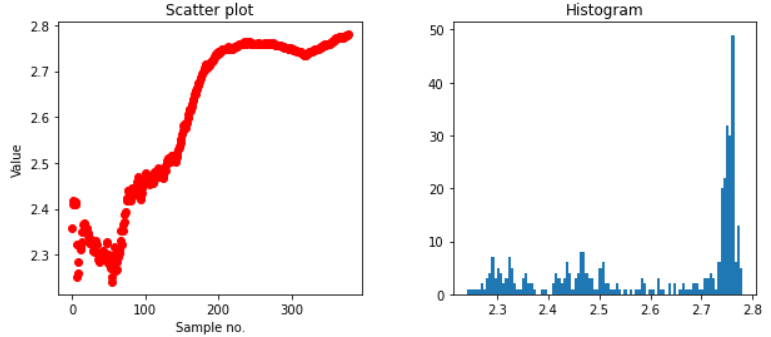}
\captionof*{figure}{(f)}
\end{minipage}\hfill
\begin{minipage}{0.45\textwidth}
\includegraphics[width=\linewidth]{9.PNG}
\captionof*{figure}{(e)}
\end{minipage}\hfill
\begin{minipage}{0.55\textwidth}
\includegraphics[width=\linewidth]{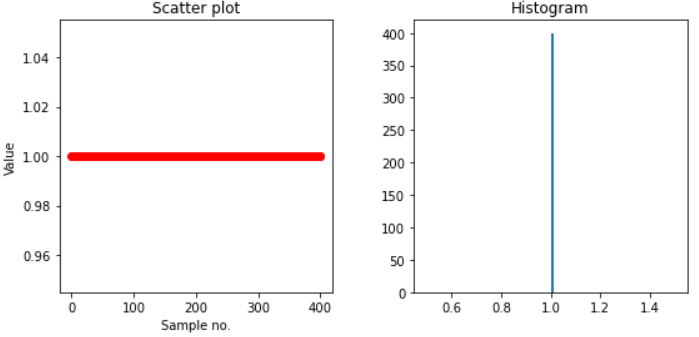}
\captionof*{figure}{(f)}
\end{minipage}\hfill
\captionof{figure}{The Distribution of Maximal Eigenvalues As the FC-dataset's Row Increases One by One In The Case of Adding Noise}
\label{fig5}
\end{center}

Theorems \ref{th3} and \ref{th4} as well as the preceding set of figures (\ref{fig5}) indicate that the distribution of a typical dataset with n variables lies between that of its equivalent FU-dataset and FC-dataset. Explicitly, the mean of the maximal-eigenvalue distribution of a typical dataset with n variables lies between the mean of the the same distribution for its FU-and-FC counterpart datasets, namely 1 and n. 

Since each time the sample value increases(new row of the matrix), the same maximal eigenvalue is achieved for FU-and-FC datasets, and both the means are the same maximal eigenvalues. For FU-datasets, it is 1, and for FC-datasets n.

\section{Multivariable Pearson's Correlation Coefficient}

In this section, we assert and subsequently demonstrate that as variables in a dataset lose their pairwise correlation, the mean of its maximal eigenvalue distribution tends to approach 1. Conversely, as pairwise correlations improve, the mean shifts toward n. Visual representations of these trends can be observed in the graphs presented in Figure \ref{fig5}. Prior to introducing the formula, we have outlined two theorems:

\begin{lemma}[Spectral Decomposition]
\label{le2}
Any correlation matrix is diagonalizable by a unitary matrix.
\end{lemma}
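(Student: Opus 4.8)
The plan is to recognize that a correlation matrix is, by construction, a real symmetric matrix — its $(i,j)$ entry is the Pearson correlation $r_{ij}=r_{ji}$ and its diagonal entries are all $1$ — so the assertion is precisely the spectral theorem for real symmetric matrices, with the orthogonal diagonalizing matrix serving as the required unitary one (every real orthogonal matrix is unitary). First I would fix notation: let $C$ be an $n\times n$ correlation matrix, so that $C=C^{\top}=\overline{C}^{\top}$, i.e. $C$ is Hermitian.

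Next I would establish the two structural facts that drive the argument. First, every eigenvalue of $C$ is real: if $Cv=\lambda v$ with $v\neq 0$ (a priori $v\in\mathbb{C}^{n}$), then $\overline{v}^{\top}Cv=\lambda\,\overline{v}^{\top}v$, while conjugating and transposing and using $C=\overline{C}^{\top}$ gives $\overline{v}^{\top}Cv=\overline{\lambda}\,\overline{v}^{\top}v$; since $\overline{v}^{\top}v>0$ this forces $\lambda=\overline{\lambda}$. Second, eigenvectors attached to distinct eigenvalues are orthogonal: from $Cu=\lambda u$ and $Cw=\mu w$ with $\lambda\neq\mu$, the scalar $u^{\top}Cw$ equals both $\lambda\,u^{\top}w$ and $\mu\,u^{\top}w$, hence $u^{\top}w=0$.

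Then I would produce the unitary diagonalization itself, for which there are two equivalent routes. One is to invoke Schur's triangularization — any complex matrix is unitarily similar to an upper-triangular matrix — and observe that if $U^{*}CU=T$ is upper triangular, then $T^{*}=U^{*}C^{*}U=U^{*}CU=T$, so $T$ is simultaneously upper and lower triangular, hence diagonal. The other is to argue by induction on $n$: peel off one real unit eigenvector $v_{1}$, note that $C$ leaves the orthogonal complement $v_{1}^{\perp}$ invariant (by symmetry), apply the inductive hypothesis to the restriction, and use Gram--Schmidt inside any eigenspace of multiplicity greater than $1$ to supply an orthonormal basis of eigenvectors. Collecting these orthonormal eigenvectors as the columns of $Q$ gives $Q^{*}Q=I$ and $Q^{*}CQ=\Lambda=\mathrm{diag}(\lambda_{1},\dots,\lambda_{n})$ with real $\lambda_{i}$, which is the claim.

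I do not anticipate a genuine obstacle, since this is a classical result; the only point that needs care is the case of repeated eigenvalues, where diagonalizability is not automatic for a general matrix and one must exploit symmetry — either via the invariant-subspace induction or via the Schur argument — rather than merely counting distinct eigenvalues. I would therefore be sure the write-up does not tacitly assume the eigenvalues are simple.
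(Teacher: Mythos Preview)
Your proposal is correct and follows the same route as the paper: observe that a correlation matrix is real symmetric (hence Hermitian) and invoke the spectral theorem. The paper's own proof is a single sentence citing a reference for the diagonalizability of symmetric matrices, so your write-up is considerably more detailed than what appears there, but the underlying idea is identical.
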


\begin{proof}
\cite{Zhang} indicates how symmetric matrices are diagonalizable by unitary matrices, and so are correlation matrices because of being symmetric.
\end{proof}

\begin{theorem}[Spectral Radius]
\label{th5}
If A is a correlation matrix, the Euclidean 2-norm of $A$ equals its maximal eigenvalue: $\| \bm{A} \|_{2}=\lambda_{max}$. For the sake of convenience, we will show $\| \bm{.} \|_{2}$ without subscription.
\end{theorem}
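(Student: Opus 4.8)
The plan is to combine the spectral decomposition from Lemma~\ref{le2} with the unitary-invariance of the Euclidean norm. First I would record two structural facts about a correlation matrix $A\in\mathbb{R}^{n\times n}$: it is real symmetric (hence Hermitian), and it is positive semidefinite, being --- up to a positive scalar --- the Gram matrix of the mean-centred, unit-variance variables. Therefore its eigenvalues $\lambda_{1}\ge\lambda_{2}\ge\dots\ge\lambda_{n}$ are all real and nonnegative, so that the spectral radius $\max_{i}|\lambda_{i}|$ coincides with the algebraically largest eigenvalue $\lambda_{max}=\lambda_{1}$.

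Next, by Lemma~\ref{le2} I would write $A=U\Lambda U^{*}$ with $U$ unitary and $\Lambda=\mathrm{diag}(\lambda_{1},\dots,\lambda_{n})$. For any vector $x$ with $\|x\|=1$, put $y=U^{*}x$; since $U^{*}$ is unitary, $\|y\|=1$, and
\[
\|Ax\|^{2}=\|U\Lambda U^{*}x\|^{2}=\|\Lambda y\|^{2}=\sum_{i=1}^{n}\lambda_{i}^{2}|y_{i}|^{2}\le\lambda_{max}^{2}\sum_{i=1}^{n}|y_{i}|^{2}=\lambda_{max}^{2},
\]
so taking the supremum over unit $x$ gives $\|A\|\le\lambda_{max}$. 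For the matching lower bound I would evaluate the norm on a unit eigenvector $v$ attached to $\lambda_{max}$: $\|Av\|=\|\lambda_{max}\,v\|=\lambda_{max}$, whence $\|A\|\ge\lambda_{max}$. The two inequalities together give $\|A\|=\lambda_{max}$.

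The step that deserves the most care --- and the closest thing to an obstacle --- is the reduction of the operator norm $\|A\|=\sup_{\|x\|=1}\|Ax\|$ to $\max_{i}|\lambda_{i}|$: this identity is particular to normal matrices and hinges on the existence of an orthonormal eigenbasis, which is exactly what Lemma~\ref{le2} supplies; for a general non-normal matrix one would instead obtain $\|A\|=\sqrt{\lambda_{max}(A^{*}A)}$, and it is symmetry that makes these coincide. A secondary, milder point is the nonnegativity of the eigenvalues, needed so that $\lambda_{max}$ may be read as ``the'' maximal eigenvalue rather than the one of largest modulus; this follows from positive semidefiniteness and is already implicit in Theorems~\ref{th3} and~\ref{th4}, where $\lambda_{max}\in[1,n]$.
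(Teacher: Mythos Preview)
Your proof is correct and follows essentially the same route as the paper: both obtain $\|A\|\ge\lambda_{max}$ by evaluating on a unit eigenvector, and both obtain $\|A\|\le\lambda_{max}$ from the spectral decomposition of Lemma~\ref{le2}, the paper via submultiplicativity together with $\|U\|=1$, you via the unitary invariance $\|U\Lambda U^{*}x\|=\|\Lambda y\|$ and an explicit diagonal estimate. Your added remark on positive semidefiniteness, ensuring $\max_i|\lambda_i|=\lambda_{max}$, makes explicit a step the paper uses silently when it writes $\lambda_{max}=|\lambda_{max}|$ and $\|D\|\le\lambda_{max}$.
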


\begin{proof}
If $\lambda_{max}$ indicates the largest eigenvalue of A, for each unit eigenvector $x$ of $\lambda_{max}$, we will have $ \lambda_{max}=\mid \lambda_{max} \mid = \| \bm{\lambda_{max}x} \|=\| \bm{Ax} \|\leq \sup_{\| \bm{x} \|=1}{\| \bm{Ax} \|=\| \bm{A} \|}$. For the reverse inequality, we use the theorem \ref{le2}, so we will have the unitary matrix $U$ so that $A=U^{t}DU$, where D is a diagonal matrix with $A$'s eigenvalues on its main diagonal. For an arbitrary unit vector $v$, we have $\| \bm{Av} \|=\| \bm{U^{t}DUv}\|$. Now, since $\| \bm{Ax} \|= (x^{t}A^{t}Ax)^{\frac{1}{2}}$ for an arbitrary $A$ and $x$, and since the norm of a unitary matrix is 1, so we will have $\| \bm{U^{t}DUv}\|\leq \| \bm{U^{t}}\|\| \bm{DUv}\|\leq \| \bm{U^{t}}\|\| \bm{D}\|\| \bm{Uv}\|\leq \| \bm{U^{t}}\|\| \bm{D}\|\| \bm{U}\|\| \bm{v}\|=  \| \bm{D}\|\leq\lambda_{max}$. Hence, taking supremum on all unit vectors shows $\| \bm{A} \|\leq\lambda_{max}$.
\end{proof}

We have provided a mathematical foundation for the desired indicator. We, actually, proved that correlation matrices are diagonalizable, and their Euclidean norm\_2 is synonymous with its largest eigenvalue. Norm\_2 is a strictly increasing function of absolute values of correlation matrix's entries because it is a combination of the square root function and the quadratic function, which are  both increasing on the non-negative set of real numbers. All correlation matrices have 1s on their main diagonal, then norm\_2 takes its minimum on the correlation matrices of FU-datasets since their off-diagonal entries are all zero. Its norm equals
\begin{equation}\label{first_equation}
    \sqrt{\sum_{i=1}^{n}{\sum_{j=1}^{n}{\delta_{ij}}}} = \sqrt{n}
\end{equation}

Where $\delta_{ij}$ is Kronecker delta and $n$ is the number of variables. The norm also takes its maximum on correlation matrices of FC-datasets because the absolute value of all their entries is 1, which is the utmost. Its norm equals
\begin{equation}\label{second_equation}
    \sqrt{\sum_{i = 1}^{n}{\sum_{j=1}^{n}{(\pm 1)}}}=n
\end{equation}

For a given dataset, the norm of its correlation matrix falls within the range defined by the norm of its fully uncorrelated (FU) dataset's correlation matrix and that of its fully correlated (FC) dataset's correlation matrix. Furthermore, when comparing the correlation matrix of the dataset to those of the FU and FC datasets, the discrepancies in their norms indicate the differences in absolute values of their respective entries. The equivalence established between the norm and the maximal eigenvalue in Theorem \ref{th5} underscores that comparing the largest eigenvalues serves as a genuine indicator of the degree to which a dataset is either fully correlated or fully uncorrelated. Specifically, the distance of the mean maximal eigenvalue from 1 reflects the general correlation level among variables, while its distance from n signifies the overall lack of correlation among variables.

Now in light of what we have discussed, we present the formula of Pearson's correlation coefficient for $n$ random variables:
\begin{equation}\label{third_equation}
    \rho_{x_{1}, x_{2}, ..., x_{n}}=\frac{mean({\lambda_{max}})-1}{n-1}
\end{equation}

Given the theorem \ref{th5}, we can rewrite the preceding formula based on the distribution of the correlation matrix norm:
\begin{equation}\label{fourth_equation}
    \rho_{x_{1}, x_{2}, ..., x_{n}}=\frac{mean(\| \bm{.} \|_{2})-\sqrt{n}}{n-\sqrt{n}}
\end{equation}

The last version seems much better because the computation of norm is less accompanied by error, unlike that of the maximal eigenvalue. On top of that, its computation is not as complicated and as time-consuming as it is with maximal eigenvalue.

According to theorems \ref{th2},\ref{th3}, \ref{th4}, \ref{th5}, $\rho$ lies in the interval of [0,1]: in the FC-case we obtain 1, and in the FU-case we attain 0; the $\rho_{x_{1}, x_{2}, ..., x_{n}}$ of every other dataset lies between 0 and 1. If variables in the dataset are less correlated on the whole, Its $\rho_{x_{1}, x_{2}, ..., x_{n}}$ tends to 0; if otherwise, tending to 1.

\noindent
It is worth mentioning that the computation of $\rho$ in this case is a bit conservative, as we take average each time a new sample adds, while 
\begin{equation}\label{fifth_equation}
    \rho_{x_{1}, x_{2}, ..., x_{n}}=\frac{\| \bm{.} \|_{2}-\sqrt{n}}{n-\sqrt{n}}
\end{equation}

is way closer to the the common notion of Pearson's correlation for two variables.

\section{Feature Selection}
\label{Feature Selection}

In tasks like supervised classification, where the goal is prediction, multivariate correlation plays a crucial role in selecting appropriate variables. When visualizing data points in a coordinated plane with predictors as axes, the accuracy of our predictions is inversely affected by the mixing of labeled data points. As labeled data becomes more intertwined, algorithms, to prevent overfitting, draw a more generalized separator that avoids precisely fitting a high-dimensional surface to divide data groups with different labels. While this strategy aims to enhance generalization, it comes at the cost of training accuracy, resulting in lower test accuracy. To mitigate this, careful consideration should be given to selecting a subset of predicting variables. The following illustration exemplifies the essence of this idea.

\par
\begin{center}
\begin{minipage}{0.40\textwidth}
\includegraphics[width=\linewidth]{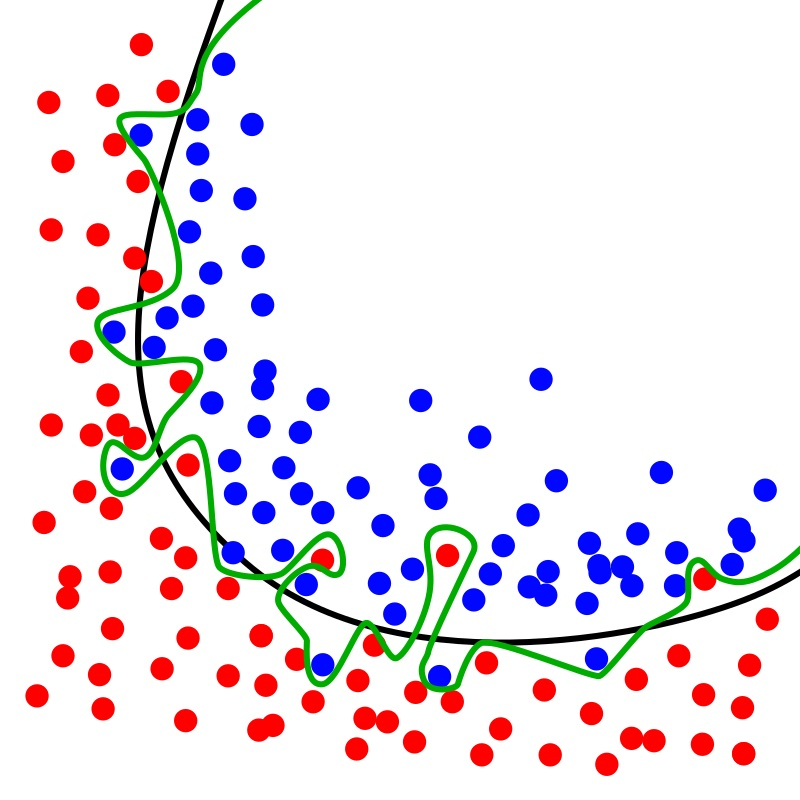}
\end{minipage}\hfill
\captionof{figure}{The Green Wavy Line Is Over-Fitting Separator, And The Black Curved One Is The Optimal-Fitting Separator}
\label{fig6}
\end{center}

Indeed, the more the total curvature of the over-fitting separator, the more the difference between the curvatures of optimal-fitting and over-fitting separators, resulting in lower accuracy.

Now, we narrow our focus down to fully-correlated datasets in $\ref{Fully-Correlated Dataset}$, in order to predict variable A while having B and C. The B-C plane will be one of the followings: 

\par
\begin{center}
\begin{minipage}{0.50\textwidth}
\includegraphics[width=\linewidth]{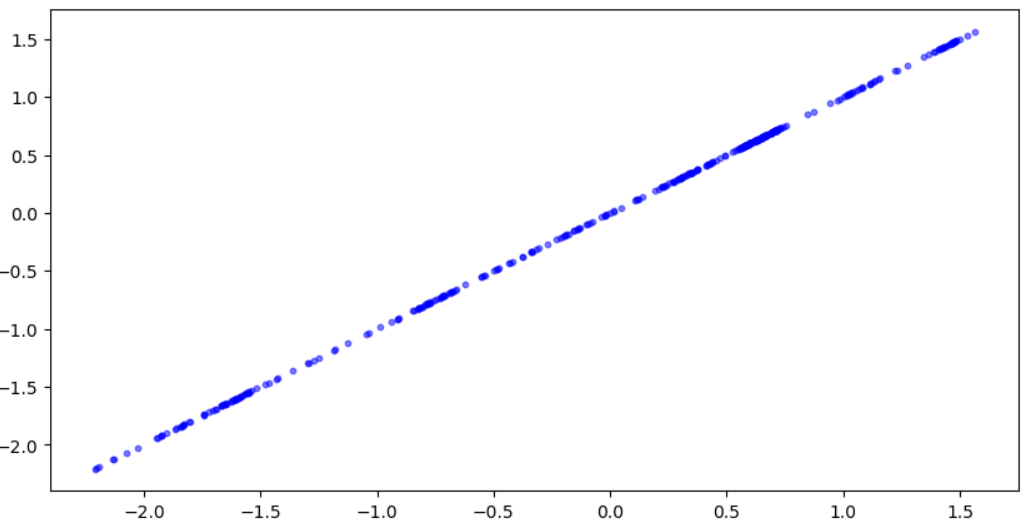}
\captionof*{figure}{(a)}
\end{minipage}\hfill
\begin{minipage}{0.50\textwidth}
\includegraphics[width=\linewidth]{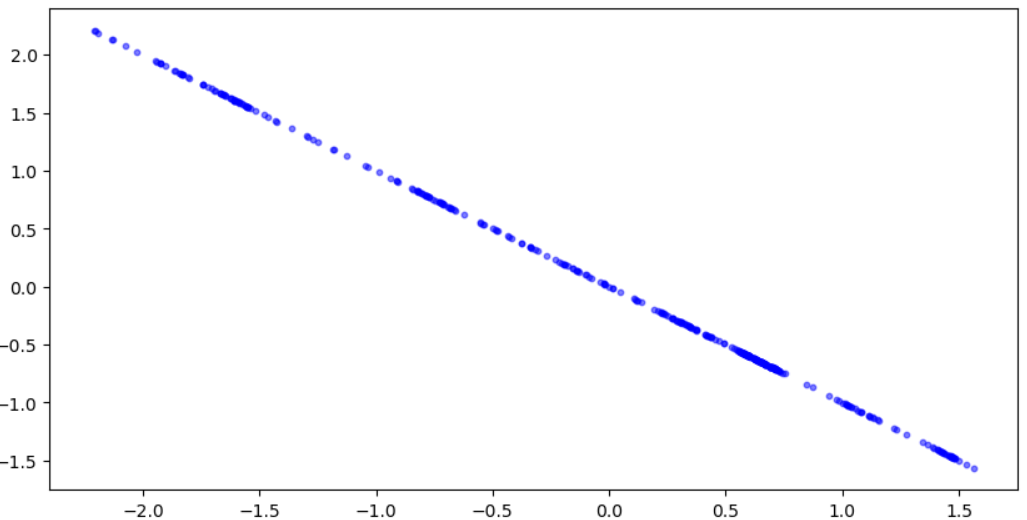} 
\captionof*{figure}{(b)}   
\end{minipage}
\captionof{figure}{The B-C Planes of Fully-Correlated Datasets}
\label{fig7}
\end{center}

We label data points on these planes based on whether they belong to the upper-median A or to the lower-median A. Data points corresponding lower-median A are blue, and those belong to upper-median A red. This way, we have a balanced labeling, which makes us sure the number of blues and reds are equal.

\par
\begin{center}
\begin{minipage}{0.50\textwidth}
\includegraphics[width=\linewidth]{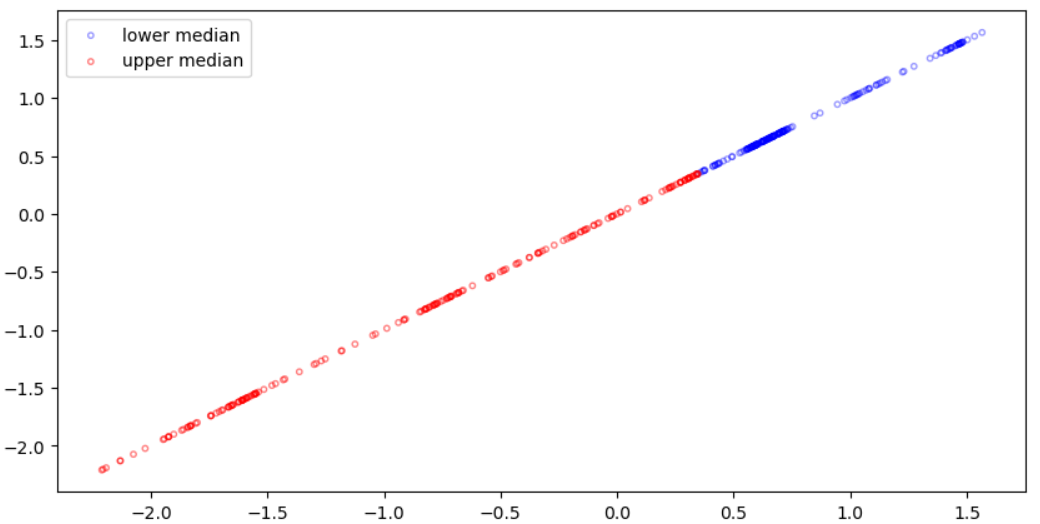}
\captionof*{figure}{(a)}
\end{minipage}\hfill
\begin{minipage}{0.50\textwidth}
\includegraphics[width=\linewidth]{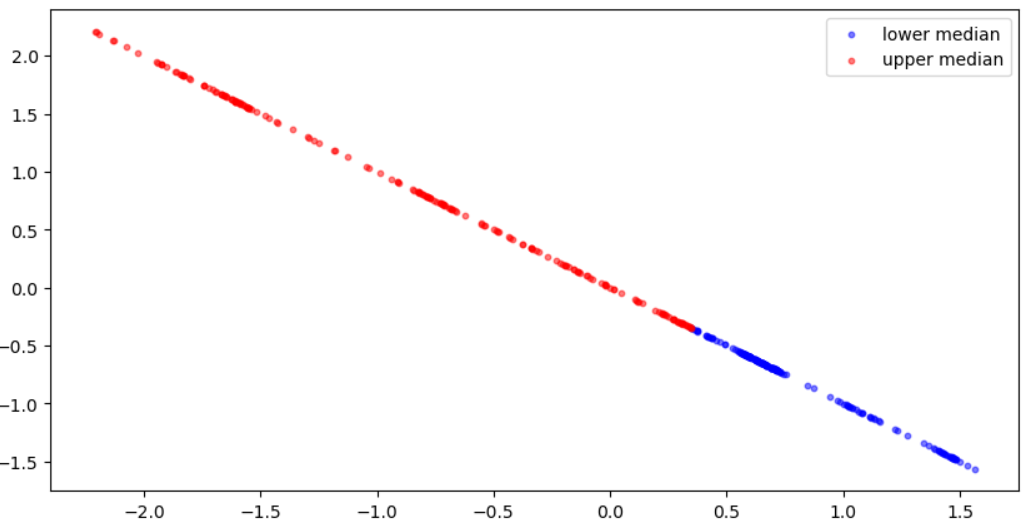} 
\captionof*{figure}{(b)}   
\end{minipage}
\begin{minipage}{0.496\textwidth}
\includegraphics[width=\linewidth]{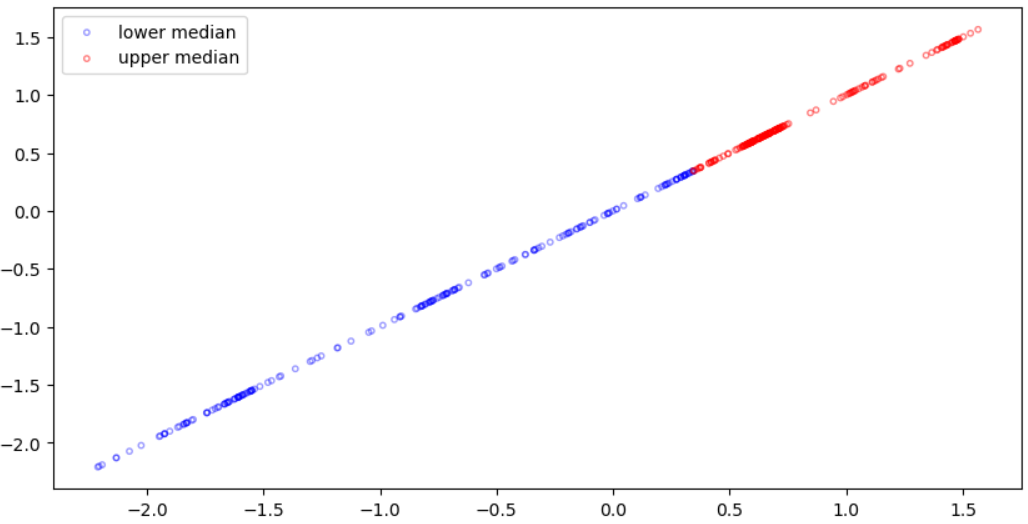} 
\captionof*{figure}{(c)}   
\end{minipage}
\begin{minipage}{0.496\textwidth}
\includegraphics[width=\linewidth]{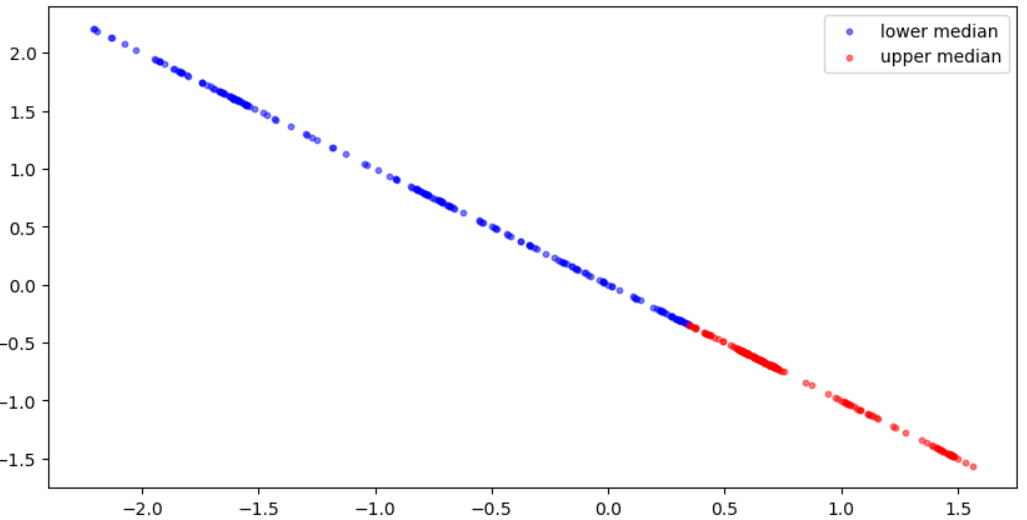} 
\captionof*{figure}{(d)}   
\end{minipage}
\captionof{figure}{The Labeled Data in B-C Planes of Fully-Correlated Datasets}
\label{fig8}
\end{center}

In each of preceding situations, a single dot on the line can separate sets of data, and there is no over-fitting challenge at all.  In the rest of this paper we just consider dataset (a) corresponding figure (a) in $\ref{Fully-Correlated Dataset}$.

We add noise to this dataset to examine the situation when pairwise correlations diminish and each time compute $\rho$.

\par
\begin{center}
\begin{minipage}{0.45\textwidth}
\includegraphics[width=\linewidth]{1.PNG}
\end{minipage}\hfill
\begin{minipage}{0.55\textwidth}
\includegraphics[width=\linewidth]{22.PNG} 
\end{minipage}
\captionof{figure}{$\rho_{A, B, C}=1$}

\begin{minipage}{0.45\textwidth}
\includegraphics[width=\linewidth]{16.PNG}
\end{minipage}\hfill
\begin{minipage}{0.55\textwidth}
\includegraphics[width=\linewidth]{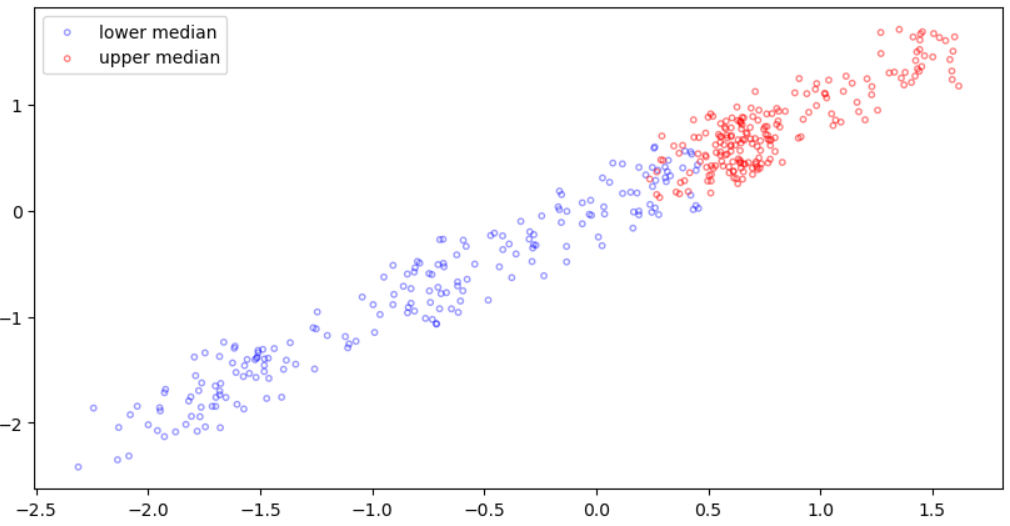} 
\end{minipage}
\captionof{figure}{$\rho_{A, B, C}=0.9768$}

\begin{minipage}{0.45\textwidth}
\includegraphics[width=\linewidth]{17.PNG}
\end{minipage}\hfill
\begin{minipage}{0.55\textwidth}
\includegraphics[width=\linewidth]{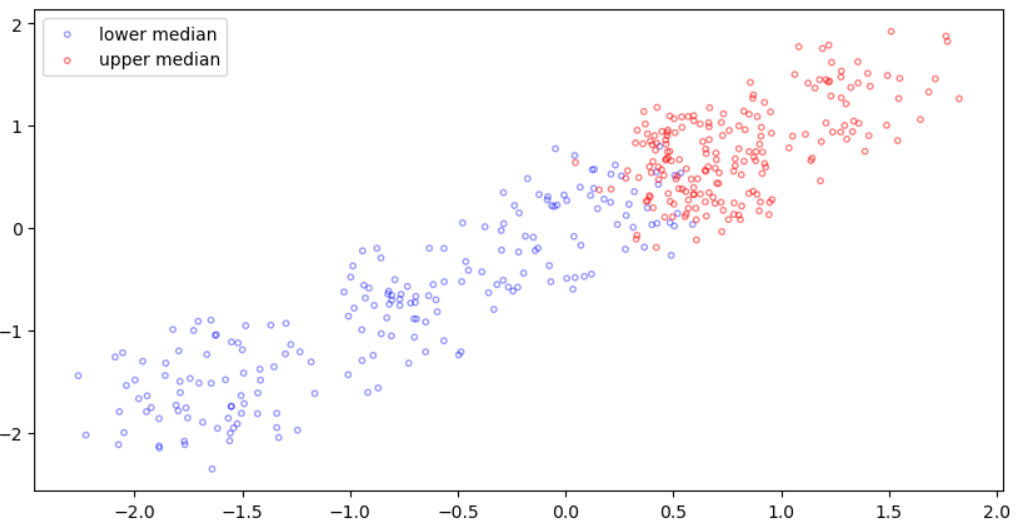}
\end{minipage}\hfill
\captionof{figure}{$\rho_{A, B, C}=0.9196$}

\begin{minipage}{0.45\textwidth}
\includegraphics[width=\linewidth]{9.PNG}
\end{minipage}\hfill
\begin{minipage}{0.55\textwidth}
\includegraphics[width=\linewidth]{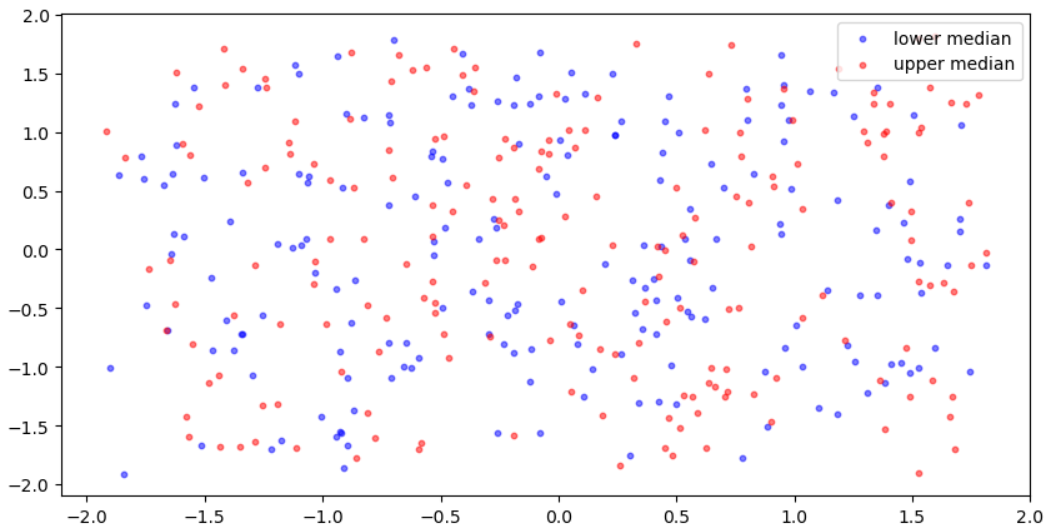}
\end{minipage}\hfill
\captionof{figure}{$\rho_{A, B, C}=0.0019$}

\label{fig9}
\end{center}

For each of the preceding datasets, we have computed the multivariate correlation that are displayed in captions. When we add noise, one thing leads to another in a domino effect and makes the prediction accuracy lower. In fact, with the loss of pairwise correlations, the data points are become mixed together more, which goes hand in hand with decrease in the multivariate correlation. Then, the curvature of over-fitting separator increases, leading to decrease in the final accuracy. Indeed, there is an inverse continuous relationship between the multivariate correlation and the curvature of over-fitting separator.  

It is worth mentioning that adding noise does not make the situation special, and not make our conclusion biased. Actually, we add noise to lose pairwise correlation to be able to examine a various range of situations. 

\subsection{Measuring Noise}
\label{Measuring Noise}

In regard with the potential noise in datasets, two sources of noise are recognizable: the first one is the noise coming from predictors, and the second one, noise stemming from the supervising variable that we call it labeling noise for the following reason. Balanced labeling is systematic, for example based on measures of central tendency. How data points in a coordinated plane are labeled entirely depends upon how values of the target variable are correlated in correspondence with those of the predicting variables.
In better words, it relies on the correlations of the target variable with predicting ones.

Consider n variables of $x_{1}, x_{2}, ..., x_{n}$ from which the first $n-1$ are going to predict $x_{n}$ through classification. The first type of noise that shows how scattered data points in the plane whose axes are $x_{1}, x_{2}, ..., x_{n}$ can be computed via $\rho_{x_{1}, x_{2}, ..., x_{n-1}}$.
When the noise coming from predators is not meaningfully large, the labeling noise is quantifiable through the following relation
\begin{equation}\label{sixth_equation}
    max(0, ((1-\rho_{x_{1}, x_{2}, ..., x_{n}})-(1-\rho_{x_{1}, x_{2}, ..., x_{n-1}})))
\end{equation}

Clearly, we remove the noise of predictors from $1-\rho_{x_{1}, x_{2}, ..., x_{n}}$. It goes without saying that we can readily compare labeling noise in different datasets with different variables, as $\rho$ is adjusted by the number of predictors, naturally by its definition. In the following, we are going to add noise only to the supervising variable of A to see how the noise of labeling increases. The labeling process is balanced: upper-median A and lower-median A, which remains the same, so do B and C.

\par
\begin{center}
\begin{minipage}{0.45\textwidth}
\includegraphics[width=\linewidth]{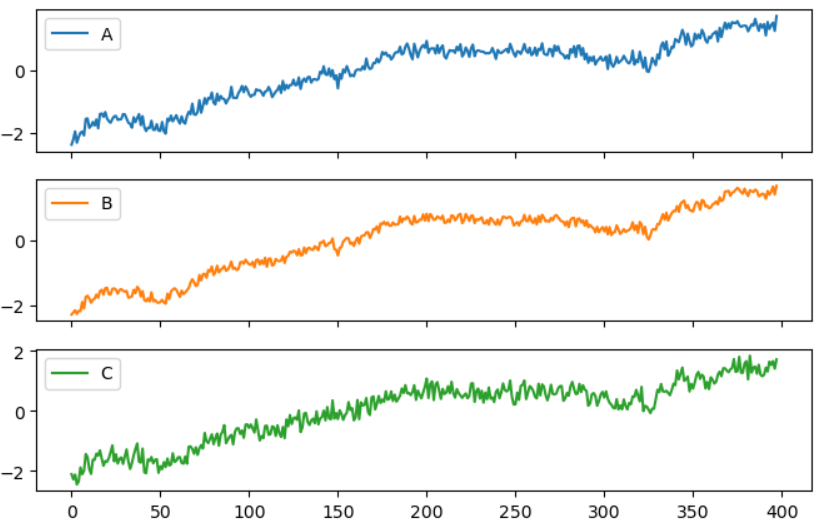}
\end{minipage}\hfill
\begin{minipage}{0.55\textwidth}
\includegraphics[width=\linewidth]{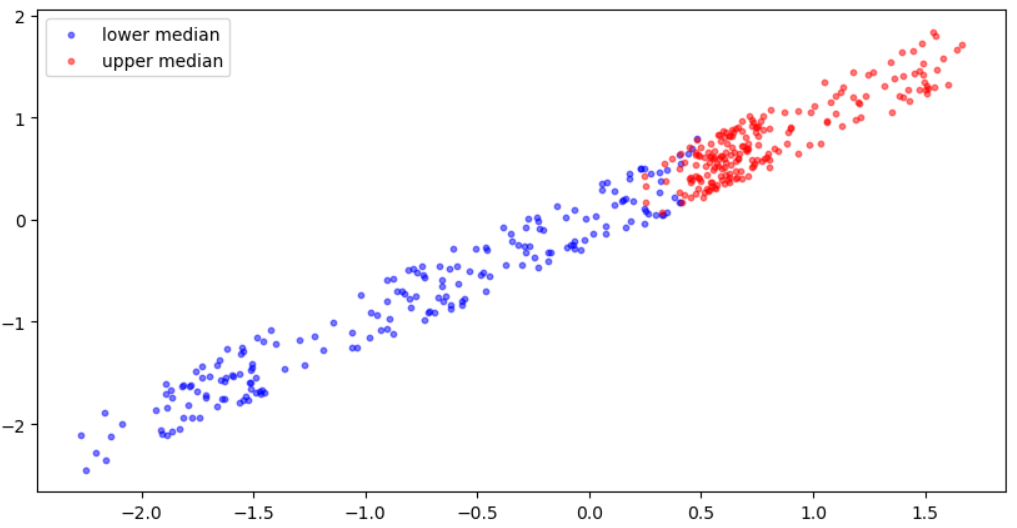} 
\end{minipage}
\captionof{figure}{$1-\rho_{A, B, C}=0.0213$, $1-\rho_{B, C}=0.0278$; The Noise of Labeling is 0}

\begin{minipage}{0.45\textwidth}
\includegraphics[width=\linewidth]{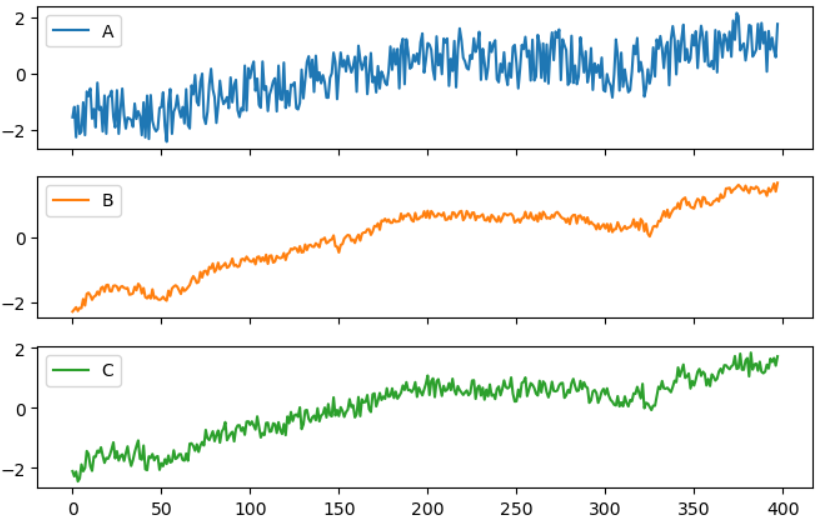}
\end{minipage}\hfill
\begin{minipage}{0.55\textwidth}
\includegraphics[width=\linewidth]{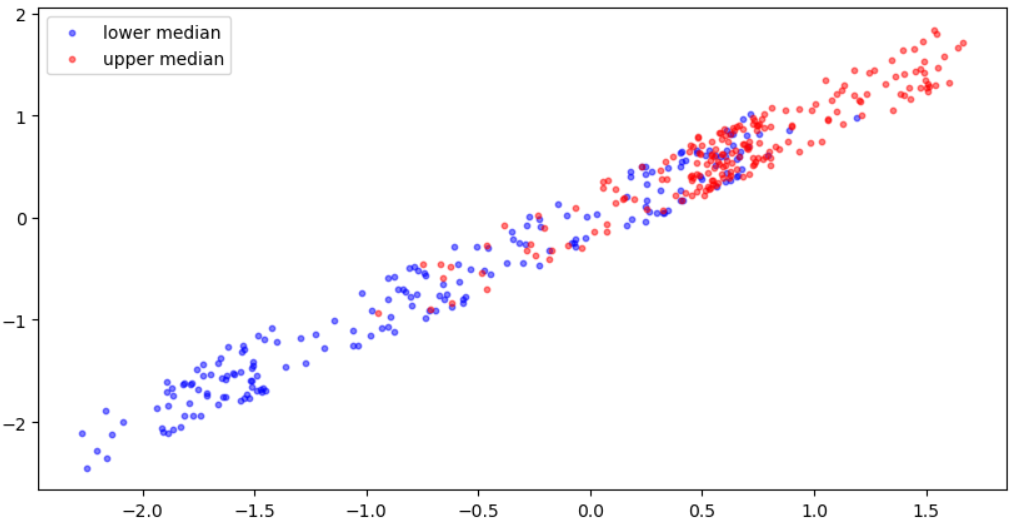} 
\end{minipage}
\captionof{figure}{$1-\rho_{A, B, C}=0.1962$, $1-\rho_{B, C}=0.0278$; The Noise of Labeling is 0.1683}

\begin{minipage}{0.45\textwidth}
\includegraphics[width=\linewidth]{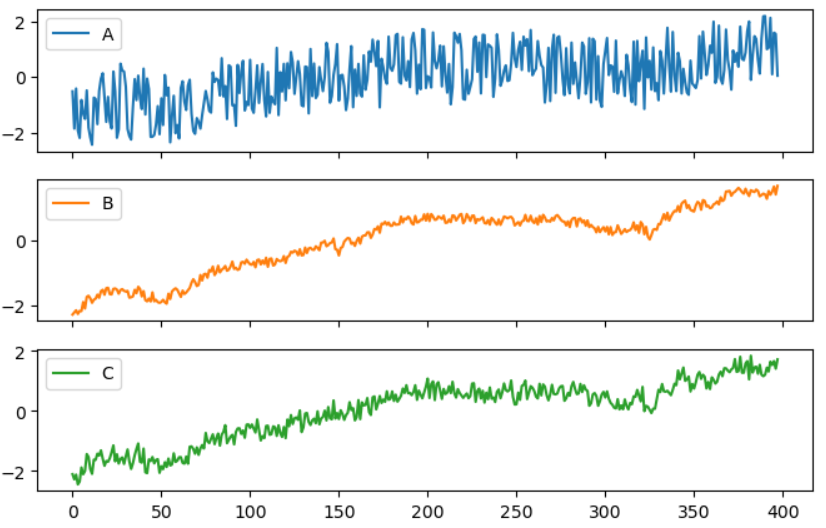}
\end{minipage}\hfill
\begin{minipage}{0.55\textwidth}
\includegraphics[width=\linewidth]{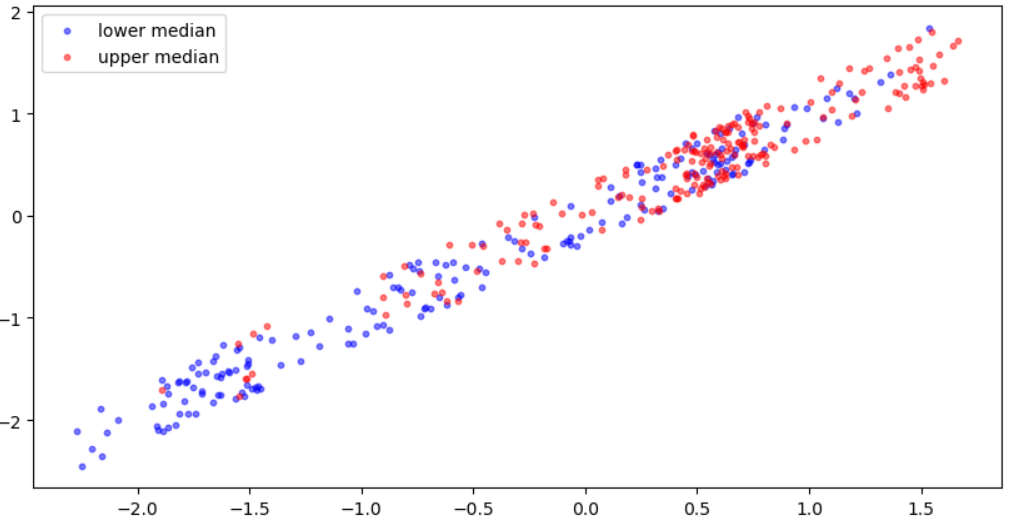}
\end{minipage}\hfill
\captionof{figure}{$1-\rho_{A, B, C}=0.3479$, $1-\rho_{B, C}=0.0278$; The Noise of Labeling is 0.3201}

\label{fig10}
\end{center}

\section{Conclusion}
\label{Conclusion}
We have presented a multivariable version of Pearson's correlation coefficient, which obviates limitations associated with common correlation coefficient. It is not just the formula itself, but how we achieved it. In fact, this article has mapped this measure of association into a broad context in order to extend it. Similarly, We can find correspondent matrix features of other statistical association measures and explore them with the aid of powerful tools that matrix theory offers. We hope this approach can contribute to a better appreciation of random matrix theory methods and tools and their importance in expanding and ameliorating statistical association measures, or even vice versa. 

By the way, we showed that, in supervised learning, with the aid of the multivariable correlation coefficient, we can separate noise of data into two different categories: noise of labeling and noise of predictors. Actually, with this method, finance experts can gauge optimally the potential noise of each source when predicting a variable like stock price or inflation rate, thereby settling for more suitable dataset to forecast.

\bibliographystyle{chicago}

\bibliography{Bibliography-MM-MC}

\begin{thebibliography}{}

\bibitem[\protect\citeauthoryear{A. and Pastur}{A. and Pastur}{1967}]{MarcenkoPastur}
A., M.~V. and L.~A. Pastur (1967).
\newblock Distribution of eigenvalues for some sets of random matrices.
\newblock {\em Mathematics of the USSR-Sbornik\/}~{\em 1\/}(4), 457.

\bibitem[\protect\citeauthoryear{A}{A}{2009}]{Newton}
A, N.~M. (2009).
\newblock Introducing the discussion paper by székely and rizzo.
\newblock {\em The Annals of statistics\/}~{\em 3\/}(4), 1233--1235.

\bibitem[\protect\citeauthoryear{C}{C}{1966}]{Gower}
C, G.~J. (1966).
\newblock Some distance properties of latent root and vector methods used in multivariate analysis.
\newblock {\em Biometrika\/}~{\em 53\/}(3-4), 140.

\bibitem[\protect\citeauthoryear{Debashis and Aue}{Debashis and Aue}{2014}]{DebashisAue}
Debashis, P. and A.~Aue (2014).
\newblock Random matrix theory in statistics: A review.
\newblock {\em Journal of Statistical Planning and Inference\/}~{\em 150}, 1--29.

\bibitem[\protect\citeauthoryear{Dyson}{Dyson}{1962a}]{Dyson}
Dyson, F.~J. (1962a).
\newblock Statistical theory of the energy levels of complex systems, i.
\newblock {\em J. Math. Phys.\/}~{\em 3}, 140.

\bibitem[\protect\citeauthoryear{Dyson}{Dyson}{1962b}]{Dysonn}
Dyson, F.~J. (1962b).
\newblock Statistical theory of the energy levels of complex systems, ii.
\newblock {\em J. Math. Phys.\/}~{\em 3}, 157.

\bibitem[\protect\citeauthoryear{Dyson}{Dyson}{1962c}]{Dysonnn}
Dyson, F.~J. (1962c).
\newblock Statistical theory of the energy levels of complex systems, iii.
\newblock {\em J. Math. Phys.\/}~{\em 3}, 166.

\bibitem[\protect\citeauthoryear{Eric, Schwertman, and Owens}{Eric et~al.}{2001}]{Langford}
Eric, L., N.~Schwertman, and M.~Owens (2001).
\newblock Is the property of being positively correlated transitive?
\newblock {\em The American Statistician\/}~{\em 55\/}(4), 322--325.

\bibitem[\protect\citeauthoryear{Ghosal and Sen}{Ghosal and Sen}{2020}]{NabarunGhosalSen}
Ghosal, D. N.~P. and B.~Sen (2020).
\newblock Measuring association on topological spaces using kernels and geometric graphs.
\newblock {\em arXiv preprint arXiv:2010.01768\/}.

\bibitem[\protect\citeauthoryear{Hongjian, Drton, and Han}{Hongjian et~al.}{2022}]{HongjianDrtonHan}
Hongjian, S., M.~Drton, and F.~Han (2022).
\newblock On the power of chatterjee’s rank correlation.
\newblock {\em Biometrika\/}~{\em 109\/}(2), 317--333.

\bibitem[\protect\citeauthoryear{J., L., and K}{J. et~al.}{2007}]{SzekelyRizzoBakirov}
J., S.~G., R.~M. L., and B.~N. K (2007).
\newblock Székely, gábor j., maria l. rizzo, and nail k. bakirov. "measuring and testing dependence by correlation of distances.
\newblock {\em The annals of statistics\/}~{\em 35\/}(6), 2769--2794.

\bibitem[\protect\citeauthoryear{J}{J}{1935}]{Schoenberg}
J, S.~I. (1935).
\newblock Remarks to maurice frechet's article``sur la definition axiomatique d'une classe d'espace distances vectoriellement applicable sur l'espace de hilbert.
\newblock {\em Annals of Mathematics\/}~{\em 3}, 724--732.

\bibitem[\protect\citeauthoryear{Julie and Holmes}{Julie and Holmes}{2016}]{JosseHolmes}
Julie, J. and S.~Holmes (2016).
\newblock Measuring multivariate association and beyond.
\newblock {\em Statistics surveys\/}~{\em 10}, 132.

\bibitem[\protect\citeauthoryear{Karl and Simar}{Karl and Simar}{2015}]{HardleSimar}
Karl, H.~W. and L.~Simar (2015).
\newblock {\em Applied multivariate statistical analysis}.
\newblock Springer.

\bibitem[\protect\citeauthoryear{Karl}{Karl}{1901}]{Pearson}
Karl, P. (1901).
\newblock On lines and planes of closest fit to systems of points in space.
\newblock {\em The London, Edinburgh, and Dublin philosophical magazine and journal of science\/}~{\em 2\/}(11), 559--572.

\bibitem[\protect\citeauthoryear{Kyung-Soo and Mayster}{Kyung-Soo and Mayster}{2017}]{LiewMayster}
Kyung-Soo, L.~J. and B.~Mayster (2017).
\newblock Forecasting etfs with machine learning algorithms.
\newblock {\em The Journal of Alternative Investments\/}~{\em 20\/}(3), 58--78.

\bibitem[\protect\citeauthoryear{Mark and Meissner}{Mark and Meissner}{2016}]{Burgin}
Mark, B. and G.~Meissner (2016).
\newblock Extended correlations in finance.
\newblock {\em Journal of Mathematical Finance\/}~{\em 6\/}(1), 178.

\bibitem[\protect\citeauthoryear{Mathias, Han, and Shi}{Mathias et~al.}{2020}]{MathiasHanShi}
Mathias, D., F.~Han, and H.~Shi (2020).
\newblock High-dimensional consistent independence testing with maxima of rank correlations.
\newblock {\em The Annals of Statistics\/}~{\em 48\/}(6), 3206--3227.

\bibitem[\protect\citeauthoryear{Mona and Chatterjee}{Mona and Chatterjee}{2021}]{AzadkiaChatterjee}
Mona, A. and S.~Chatterjee (2021).
\newblock A simple measure of conditional dependence.
\newblock {\em The Annals of Statistics\/}~{\em 49\/}(6), 3070--3102.

\bibitem[\protect\citeauthoryear{P}{P}{1988}]{Romano}
P, R.~J. (1988).
\newblock A bootstrap revival of some non-parametric distance tests.
\newblock {\em Journal of the American Statistical Association\/}~{\em 83\/}(403), 698--708.

\bibitem[\protect\citeauthoryear{P}{P}{1989}]{Romanoj}
P, R.~J. (1989).
\newblock Bootstrap and randomization tests of some nonparametric hypotheses.
\newblock {\em The Annals of Statistics\/}~{\em 17\/}(1), 141--159.

\bibitem[\protect\citeauthoryear{Paul and Escoufier}{Paul and Escoufier}{1976}]{PaulEscoufier}
Paul, R. and Y.~Escoufier (1976).
\newblock A unifying tool for linear multivariate statistical methods: the rv‐coefficient.
\newblock {\em Journal of the Royal Statistical Society: Series C (Applied Statistics)\/}~{\em 25\/}(3), 257--265.

\bibitem[\protect\citeauthoryear{Promit and Sen}{Promit and Sen}{2019}]{GhosalSen}
Promit, G. and B.~Sen (2019).
\newblock Multivariate ranks and quantiles using optimal transportation and applications to goodness-of-fit testing.
\newblock {\em arXiv preprint arXiv:1905.05340 6\/}.

\bibitem[\protect\citeauthoryear{Sky and Bickel}{Sky and Bickel}{2020}]{CaoBickel}
Sky, C. and P.~J. Bickel (2020).
\newblock Correlations with tailored extremal properties.
\newblock {\em arXiv preprint arXiv:2008.10177\/}.

\bibitem[\protect\citeauthoryear{Sourav}{Sourav}{2021}]{Chatterjee}
Sourav, C. (2021).
\newblock A new coefficient of correlation.
\newblock {\em Journal of the American Statistical Association\/}~{\em 116\/}(536), 2009--2022.

\bibitem[\protect\citeauthoryear{Susan}{Susan}{2008}]{HolmesSusan}
Susan, H. (2008).
\newblock {\em Probability and Statistics: Essays in Honor of David A. Freedman. Multivariate data analysis: the French way.}
\newblock Institute of Mathematical Statistics, Beachwood, Ohio.

\bibitem[\protect\citeauthoryear{T., M., W., and W}{T. et~al.}{2016}]{GoreckiKrzyskoRatajczakWolynski}
T., G., K.~M., R.~W., and W.~W (2016).
\newblock An extension of the classical distance correlation coefficient for multivariate functional data with applications.
\newblock {\em Statistics in Transition new series\/}~{\em 3\/}(17), 449--466.

\bibitem[\protect\citeauthoryear{T. and Cadima}{T. and Cadima}{2016}]{JolliffeCadima}
T., J.~I. and J.~Cadima (2016).
\newblock Principal component analysis: a review and recent developments.
\newblock ~{\em 374\/}(2065).

\bibitem[\protect\citeauthoryear{Wigner}{Wigner}{1951}]{Wigner}
Wigner, E.~P. (1951).
\newblock On the statistical distribution of the widths and spacings of nuclear resonance levels.
\newblock {\em Proc. Cam. Phil. Soc.\/}~{\em 47}, 790.

\bibitem[\protect\citeauthoryear{Wigner}{Wigner}{1955}]{Wignerr}
Wigner, E.~P. (1955).
\newblock Characteristic vectors of bordered matrices with infinite dimensions.
\newblock {\em Ann. Math.\/}~{\em 62}, 548.

\bibitem[\protect\citeauthoryear{Wigner}{Wigner}{1957}]{Wignerrr}
Wigner, E.~P. (1957).
\newblock Statistical properties of real symmetric matrices with many dimensions.
\newblock {\em Can. Math. Congr. Proc., University of Toronto Press\/}, 174.

\bibitem[\protect\citeauthoryear{Wishart}{Wishart}{1928}]{Wishart}
Wishart, J. (1928).
\newblock The generalised product moment distribution in samples from a normal multivariate population.
\newblock {\em Biometrika\/}~{\em 20A\/}(1–2), 32–52.

\bibitem[\protect\citeauthoryear{Yves}{Yves}{1973}]{Escoufier}
Yves, E. (1973).
\newblock Le traitement des variables vectorielles." biometrics.
\newblock {\em Biometrics\/}, 751--760.

\bibitem[\protect\citeauthoryear{Zhang}{Zhang}{2011}]{Zhang}
Zhang, F. (2011).
\newblock {\em Matrix theory: basic results and techniques}.
\newblock Springer.

\end{thebibliography}

\end{document}